\let\oldnl\nl
\newcommand{\nonl}{\renewcommand{\nl}{\let\nl\oldnl}} 
\crefname{section}{Sec.}{Section} 
\crefname{subsection}{subsection}{Subsection}
\newtheorem{thm}{Theorem}
\newtheorem{lemma}[thm]{Lemma}
\def\@caption@fignum@sep{\ (Color online).\ }
\newcounter{parentnumber}
\crefname{thm}{Theorem}{Theorems}
\begin{document}
\title{Complexity phase diagram for interacting and long-range bosonic Hamiltonians}

\author{Nishad Maskara}\thanks{The two authors contributed equally.}
\affiliation{Department of Physics, California Institute of Technology, Pasadena, CA 91125, USA}
\affiliation{Joint Center for Quantum Information and Computer Science, NIST/University of Maryland, College Park, MD 20742, USA}
\author{Abhinav Deshpande}\thanks{The two authors contributed equally.}
\affiliation{Joint Center for Quantum Information and Computer Science, NIST/University of Maryland, College Park, MD 20742, USA}
\affiliation{Joint Quantum Institute, NIST/University of Maryland, College Park, MD 20742, USA}
\author{Adam Ehrenberg}
\affiliation{Joint Center for Quantum Information and Computer Science, NIST/University of Maryland, College Park, MD 20742, USA}
\affiliation{Joint Quantum Institute, NIST/University of Maryland, College Park, MD 20742, USA}
\author{Minh C.\ Tran}
\affiliation{Joint Center for Quantum Information and Computer Science, NIST/University of Maryland, College Park, MD 20742, USA}
\affiliation{Joint Quantum Institute, NIST/University of Maryland, College Park, MD 20742, USA}
\affiliation{Kavli Institute for Theoretical Physics, University of California, Santa Barbara, CA 93106, USA}
\author{Bill Fefferman}
\affiliation{Joint Center for Quantum Information and Computer Science, NIST/University of Maryland, College Park, MD 20742, USA}
\affiliation{Electrical Engineering and Computer Sciences, University of California, Berkeley, CA 94720, USA}
\author{Alexey V.\ Gorshkov}
\affiliation{Joint Center for Quantum Information and Computer Science, NIST/University of Maryland, College Park, MD 20742, USA}
\affiliation{Joint Quantum Institute, NIST/University of Maryland, College Park, MD 20742, USA}

\begin{abstract}
We classify phases of a bosonic lattice model based on the computational complexity of classically simulating the system.
We show that the system transitions from being classically simulable to classically hard to simulate as it evolves in time, extending previous results to include on-site number-conserving interactions and long-range hopping.
Specifically, we construct a ``complexity phase diagram'' with ``easy'' and ``hard'' phases, and derive analytic bounds on the location of the phase boundary with respect to the evolution time and the degree of locality.
We find that the location of the phase transition is intimately related to upper bounds on the spread of quantum correlations and protocols to transfer quantum information.
Remarkably, although the location of the transition point is unchanged by on-site interactions, the nature of the transition point changes dramatically.
Specifically, we find that there are two kinds of transitions, sharp and coarse, broadly corresponding to interacting and noninteracting bosons, respectively.
Our work motivates future studies of complexity in many-body systems and its interplay with the associated physical phenomena.
\end{abstract}

\maketitle

A major effort in quantum computing is to find examples of quantum speedups over classical algorithms, despite the absence of general principles characterizing such a speedup.
The study of classical simulability of quantum systems evolving in time allows one to identify features underlying a quantum advantage.
Studying the classical simulability of both quantum circuits \cite{Valiant2002,Terhal2002,Terhal2002a,Aaronson2004,Jozsa2008,Ni2012,Lloyd1995,Deutsch1995,Bremner2002,Aaronson2011,Bremner2011,Fefferman2016,Bremner2016,Bremner2017,Bermejo-Vega2018,Hangleiter2018,Haferkamp2019} and Hamiltonians \cite{Childs2011,Bouland2016}, especially under restrictions such as spatial locality \cite{Deshpande2018,Muraleedharan2018}, allows one to understand the classical-quantum divide in terms of their respective computational complexity.

\begin{figure}[h!]
\includegraphics[width=0.8\linewidth]{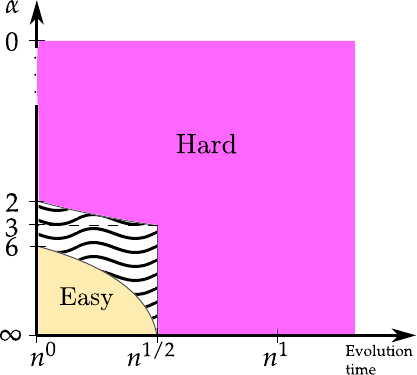}
\caption{A slice of the complexity phase diagram for the long-range bosonic Hamiltonian in 2D with $n$ bosons when the initial inter-boson spacing is $L$\,$=$\,$\Theta(\sqrt{n})$ (see \cite{note_littleo}).
Colors represent whether the sampling problem is easy (yellow), hard (magenta), or not currently known (hatched).
The $X$-axis parametrizes the evolution time as a polynomial function of $n$, and the $Y$-axis is $\alpha$, the exponent characterizing the long-range nature of the hopping Hamiltonian (with scale $y$\,$=$\,$1/\sqrt{\alpha}$ except for the point $\alpha$\,$=$\,$0$).
} \label{fig_phasediag}
\end{figure}

In this work, we characterize the worst-case computational complexity of simulating time evolution under bosonic Hamiltonians and study a dynamical phase transition in approximate sampling complexity \cite{Deshpande2018,Muraleedharan2018}. 
Previous work \cite{Deshpande2018} studied free bosons with nearest-neighbor hopping but did not consider the robustness of the transition to perturbations in the Hamiltonian, a crucial question in the study of any phase transition.
We generalize Ref.\,\cite{Deshpande2018} to include number-conserving interactions and long-range hops and conclude that the phase transition is indeed robust.
These kinds of interactions are ubiquitous in experimental implementations of hopping Hamiltonians with ultracold atoms and superconducting circuits \cite{Norcia2018,Neill2018}.
Long-range hops which fall off as a power law are also native to several architectures \cite{Saffman2010,Britton2012,Yao2012,Yan2013,Douglas2015}.
We study the location of the phase transition and its dependence on various system parameters, constructing a \textit{complexity phase diagram}, a slice of which is presented in \cref{fig_phasediag}.

One new insight from this work is the discovery of two different kinds of complexity phase transitions, sharp and coarse, in the context of dynamical quantum systems.
Sharp and coarse transitions are common in probabilistic graph theory \cite{Janson2000}, and are reminiscent of I- and II-order phase transitions in many-body physics.
Specifically, in interacting systems, which are universal for quantum computation, we find coarse transitions in 1D and sharp transitions in higher dimensions.
Further, our results suggest that for noninteracting systems, which are not believed to be universal for quantum computation, the transition is coarse in all dimensions.

\emph{Setup and summary of results.---}
Consider a system of $n$ bosons hopping on a cubic lattice of $m$ sites in $D$ dimensions with real-space bosonic operators $a_j$.
We let $m$\,$=$\,$\Theta(n^\beta)$ (see \cite{note_littleo}) and assume sparse filling: $\beta$\,$\geq$\,$1$.
The Hamiltonian $H$\,$=$\,$\sum_{i,j}J_{ij}(t) a_i^\dag a_j+\mathrm{h.c.}+ \sum_i f(n_i)$ has time-dependent hopping terms bounded by a power-law $\abs{J_{ij}(t)}$\,$\leq$\,$1/d(i,j)^\alpha$ and on-site interactions $f(n_i)$.
The parameter $\alpha$ governs the degree of locality.
When $\alpha$\,$=$\,$0$, the system has all-to-all couplings, while $\alpha$\,$\rightarrow$\,$\infty$ corresponds to nearest-neighbor hops.
The on-site terms $J_{ii}(t)$ can be large, and the interaction strength is $\abs{f(n_i)} \sim V$.
For concreteness, our hardness results are derived using a Bose-Hubbard interaction $f(n_i)$\,$=$\,$Vn_i(n_i-1)/2$, but the timescales we present are valid for generic on-site interactions \cite{Childs2013}.
The bosons in the initial states considered are sparse and well-separated.
Specifically, partition the lattice into $K$ clusters $C_1,\ldots,C_K$ containing $b_1,\ldots,b_K$ initial bosons respectively, such that $b$\,$:=$\,$\max{b_i}$\,$=$\,$O(1)$ does not scale with lattice size.
Define the width $L_i$ of a cluster $C_i$ as the minimum distance between a site outside the cluster and an initially occupied site inside the cluster 
and let $L$\,$=$\,$\min_i{L_i}$.
While this can be done for any initial state, choosing a good clustering so the separations $L_i$ are large may be difficult.
As in Ref.\,\cite{Deshpande2018}, we consider states with $L$\,$=$\,$\Theta(n^{(\beta-1)/D})$.

The computational task of approximate sampling is to simulate projective measurements of the time-evolved state in the local boson-number basis.
The \textit{approximate sampling complexity} measures the classical resources needed to produce samples from a distribution $\tilde{\mathcal{D}}$ that is $\epsilon$\,$=$\,$ O(1/\poly(n))$-close in total variation distance to the target distribution $\mathcal{D}$ \cite{note_weaker}.
Sampling from a distribution $\tilde{\mathcal{D}}$ satisfying the above takes runtime $T(n,t)$ in the worst case on a classical computer, where $t$ is the evolution time.
Like thermodynamic quantities, the complexity is defined asymptotically as $n$\,$\to$\,$\infty$, so we consider the scaling of $T$ along a curve $t(n)$.
For any curve $t(n)$, sampling is \emph{easy} if there exists a polynomial-runtime classical algorithm, meaning $T(n,t(n))$\,$=$\,$O(n^k)$ for constant $k$, or \emph{hard} if such an algorithm cannot exist.
Since the problem is either easy or hard for a particular function $t(n)$, there is always a transition in complexity as opposed to a smooth crossover.
The \emph{transition timescale} $t_*(n)$ is a function such that for any timescale $t = o(t_*)$ the problem is easy and for any timescale $t$\,$=$\,$\omega(t_*)$ it is hard.
For reasons that will become clear, we consider the scaling $t(n)$\,$=$\,$cn^\gamma$ and place bounds on the location of the transition: $t_\mathrm{easy}(n)$\,$\equiv$\,$c_\mathrm{easy}n^{\gamma_\mathrm{easy}}$\,$\leq$\,$t_*(n)$\,$\leq$\,$t_\mathrm{hard}(n)$\,$\equiv$\,$c_\mathrm{hard}n^{\gamma_\mathrm{hard}}$, where $t_{*}(n)$\,$\equiv$\,$c_{*}n^{\gamma_{*}}$.

We find that the transition comes in two types, which we call ``sharp'' and ``coarse'' (\cref{fig_sharpcoarse}).
For sharp transitions, the optimal exponents overlap ($\gamma_\mathrm{hard}^{\mathrm{opt}}$\,$=$\,$\gamma_\mathrm{easy}^{\mathrm{opt}}$) and the transition occurs in the coefficients ($c_\mathrm{hard}^{\mathrm{opt}}$\,$>$\,$ c_\mathrm{easy}^{\mathrm{opt}}$).
For coarse transitions, we instead have $\gamma_\mathrm{hard}^{\mathrm{opt}}$\,$>$\,$ \gamma_\mathrm{easy}^{\mathrm{opt}}$ (see \cite{note_precisedef,Janson2000} for more precise definitions).
An example of a sharp transition is when the transition timescale is $t_*$\,$=$\,$2n$, so that the problem is easy for all times $t$\,$\leq$\,$1.99n$ and hard for all times $t$\,$\geq$\,$2.01n$.
An example of a coarse transition is when the transition timescale is $t_*$\,$=$\,$\Theta(n\log n)$, so that the problem is easy for all times $t$\,$\leq$\,$cn$ and hard for all times $t$\,$\geq$\,$cn^{1.01}$.

\begin{figure}
\includegraphics[width=\linewidth]{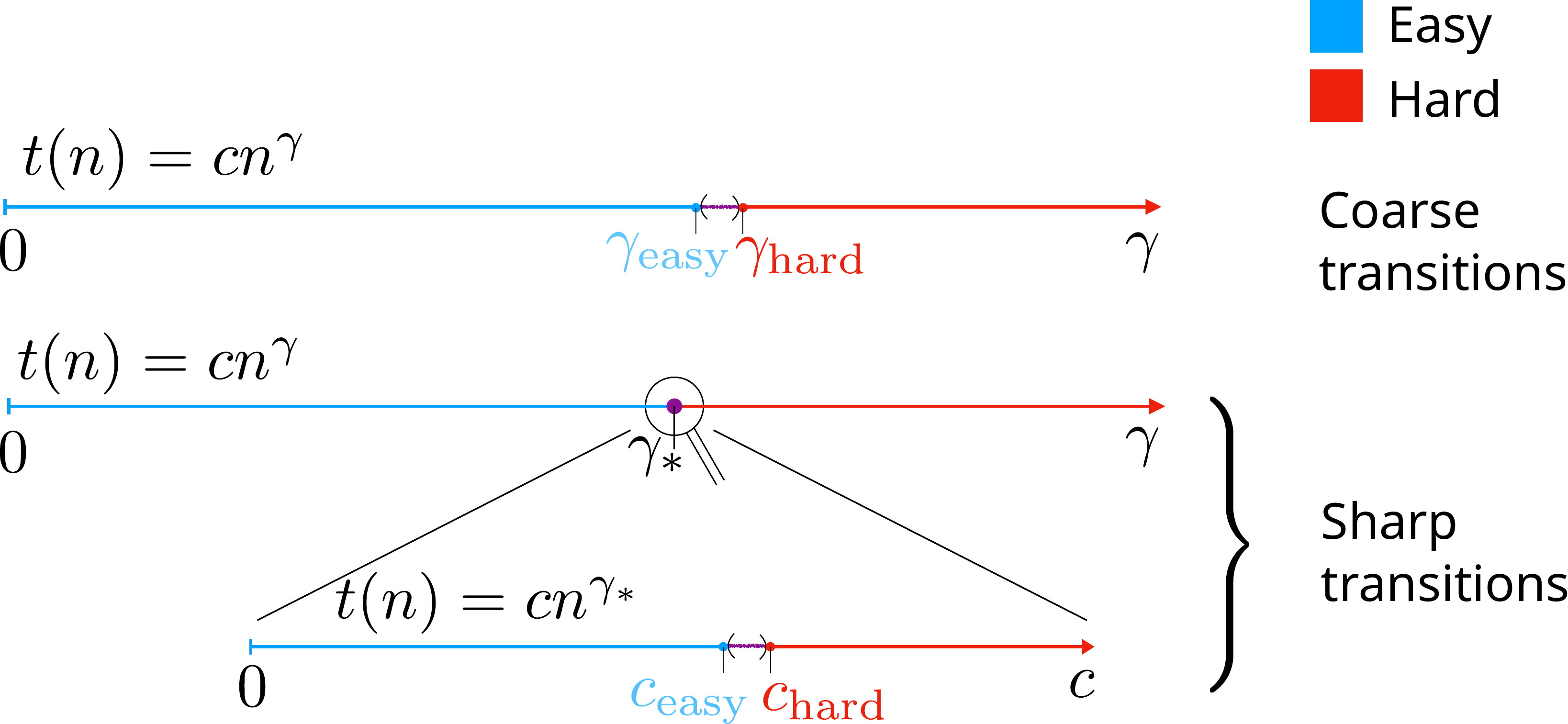}
\caption{Schematic for (a) coarse and (b) sharp transitions under the scaling $t(n)$\,$=$\,$cn^\gamma$.
For coarse transitions, the transition timescale lies between two exponents $\gamma_\mathrm{easy}$ and $\gamma_\mathrm{hard}$.
For sharp transitions, the transition timescale is at an exponent $\gamma_*$, and the complexity of the problem depends on whether the prefactor $c$ is smaller or larger than $c_*$.} \label{fig_sharpcoarse}
\end{figure}

We summarize our main results in \cref{thm_easy,thm_hard}.
The easiness result comes from applying classical algorithms for quantum simulation, and depend on Lieb-Robinson bounds on information transport \cite{Lieb1972,Hastings2006,Gong2014,Foss-Feig2015,Tran2019}.
The hardness results come from reductions to families of quantum circuits for which efficient approximate samplers cannot exist, modulo widely believed conjectures in complexity theory \cite{Aaronson2011,Bermejo-Vega2018,Hangleiter2018,Haferkamp2019}, and from fast protocols to transmit quantum information across long distances \cite{Guo2019a,Tran2020}.

\begin{thm}[Easiness result]\label{thm_easy}
For $\alpha $\,$>$\,$ D+1$, and for all $V$, including $V $\,$=$\,$ o(1)$ and $V=\omega(1)$, we have $t_{\mathrm{easy}} $\,$=$\,$ \Omega(n^{\gamma_\mathrm{easy}})$, with
\begin{align}\label{eq_easytimescale}
\gamma_{\mathrm{easy}} = \frac{\beta-1}{D} \times \frac{\alpha - 2D}{\alpha-D} - \frac{1}{\alpha-D},
\end{align}
and $t_{\mathrm{easy}} $\,$=$\,$ \Omega(\log n)$ if $\gamma_\mathrm{easy} $\,$<$\,$ 0$.
\end{thm}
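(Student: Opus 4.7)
The strategy is to exploit the sparse-cluster structure of the initial state together with a power-law Lieb-Robinson (LR) bound to reduce the global sampling task to independent local simulations on $O(1)$-boson subsystems. First I would enlarge each initial cluster $C_i$ by a shell of thickness $\ell$ to form disjoint regions $C_i^+$ (possible as long as $2\ell < L$) and define a truncated Hamiltonian $\tilde{H}$ obtained by discarding every hopping term with an endpoint outside all of the $C_i^+$. Then $\tilde{H}$ decomposes as a direct sum, so $\tilde{U}(t) = \bigotimes_i U_i(t)$.

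\textbf{Step 2: bounding the truncation error.} Because the initial state is supported inside $\bigcup_i C_i$, only couplings crossing the boundary of some $C_i^+$ contribute to the error. I would apply the power-law LR bound of Tran2019, adapted to the bosonic setting by restricting to low-particle-number sectors (justified by number conservation together with $b=O(1)$ initial bosons per cluster, which caps the effective local Hilbert space). For $\alpha > D+1$, summing boundary couplings gives an estimate of the form
\begin{equation*}
\| (U(t) - \tilde{U}(t))\ket{\psi_0} \| \lesssim \frac{n\, t^{\alpha-D}}{\ell^{\alpha-D}}\cdot (\text{polylog}),
\end{equation*}
where the $n$ prefactor absorbs the count of boundary terms, and the $(t/\ell)^{\alpha-D}$ scaling reflects the polynomial light cone in the $D<\alpha$ regime.

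\textbf{Step 3: local simulation and optimization of $\ell$.} Each $C_i^+$ has at most $b$ bosons on $O(\ell^D)$ sites, so the relevant number sector has dimension $O(\ell^{bD})$, polynomial in $\ell$. Exact diagonalization of each $U_i(t)$ in that sector and sampling from the cluster marginals in $\mathrm{poly}(\ell,n)$ time yields a product sampler for the distribution induced by $\tilde{U}(t)$; the total variation error is bounded by the norm in Step 2. Choosing $\ell = \Theta(L) = \Theta(n^{(\beta-1)/D})$ and solving for the largest $t$ consistent with $\epsilon = O(1/\mathrm{poly}(n))$ gives $t \lesssim \ell^{(\alpha-2D)/(\alpha-D)}\, n^{-1/(\alpha-D)}$, which simplifies to the claimed exponent $\gamma_{\mathrm{easy}}$. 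When $\gamma_{\mathrm{easy}}<0$, the same argument with $\ell$ allowed to grow slowly in $n$ still recovers $t_{\mathrm{easy}} = \Omega(\log n)$.

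\textbf{Main obstacle.} The central difficulty is obtaining an LR bound that survives strong on-site interactions, including $V = \omega(1)$. Standard LR proofs assume bounded local operator norms, which fails outright for bosonic $a_i^\dagger a_i$, and strong $V$ could in principle destabilize naive bounds. The resolution leans on the interaction being purely on-site and number-conserving, so it never transports bosons across boundaries; it therefore dresses the local evolution $U_i$ without contributing to the LR tail, and only the hopping terms need to be controlled by the bound. Implementing this rigorously requires a boson-number truncation that grows with $n$ and a careful argument (in the spirit of Childs2013) that high-occupation tails are negligible in expectation on the evolved state, and this is where I expect the technical work to be concentrated.
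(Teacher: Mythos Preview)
Your overall architecture is right---use a Lieb--Robinson-type decoupling of the clusters, then simulate each locally---and you correctly flag the bosonic norm issue as the main obstacle. But there are two concrete gaps that prevent the argument from reaching the stated exponent.

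\textbf{The error bound and the claimed timescale are inconsistent.} From your estimate $\epsilon \lesssim n\, t^{\alpha-D}/\ell^{\alpha-D}$, setting $\epsilon = O(1)$ with $\ell = \Theta(L)$ gives $t \lesssim L\, n^{-1/(\alpha-D)}$, i.e.\ exponent $(\beta-1)/D - 1/(\alpha-D)$, which is \emph{larger} than the claimed $\gamma_{\mathrm{easy}}$ (the first term is off by a factor $(\alpha-2D)/(\alpha-D)$). The error that actually yields $\gamma_{\mathrm{easy}}$ has $L^{\alpha-2D}$ in the denominator, not $\ell^{\alpha-D}$; the missing $L^{D}$ comes from surface-area contributions that your single-shell truncation does not generate. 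So either your bound is too optimistic or your algebra slipped; in either case the derivation as written does not produce the exponent you state.

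\textbf{Missing the time-slicing mechanism.} The HHKL decomposition error in the power-law setting carries an \emph{exponential} prefactor, schematically $(e^{vt}-1)\,\Phi(X)\,\ell^{-\alpha+D+1}$. A one-shot application with $\ell = \Theta(L)$ therefore only yields $t_{\mathrm{easy}} = O(\log n)$ for \emph{every} $\alpha > D+1$, not a polynomial timescale. The paper obtains the polynomial exponent by applying the decomposition in $N = t/t_1$ short steps with $t_1 = O(1)$: this linearizes $e^{vt_1}-1 \approx vt_1$, but the approximating region grows by $\ell$ at each step, so the boundary areas $(r_0 + j\ell)^{D-1}$ must be summed over $j$, contributing $\sim N L^{D-1}$. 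With $\ell \sim L/N$ this produces exactly the error $n\,t^{\alpha-D}/L^{\alpha-2D}$ that gives $\gamma_{\mathrm{easy}}$. Your single enlargement by a fixed $\ell$ misses this trade-off entirely, and without it you are stuck in the $\Omega(\log n)$ regime.

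On the bosonic obstacle: your intuition that on-site terms can be rotated away and do not enter the Lieb--Robinson velocity is correct. The actual danger is that the \emph{hopping} terms $a_i^\dagger a_j$ have norm $O(n)$ in the $n$-boson sector. The paper handles this not by a truncation that grows with $n$, but by projecting onto the subspace with at most $b+1$ bosons per cluster (a fixed cap), defining $H' = QHQ$, running HHKL for $H'$ with velocity $v = O(b) = O(1)$, and then separately bounding the truncation error by showing $(H-H')$ annihilates the product-state approximation and has norm $O(bL^{D-\alpha})$ on the residual. This two-stage structure is what makes $t_{\mathrm{easy}}$ uniform in $V$.
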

\cref{thm_easy} is valid for any form of the on-site interaction $f(n_i)$ and features the same timescale irrespective of the interaction strength.
In the nearest neighbor limit $\alpha$\,$\rightarrow$\,$\infty$, this reproduces the timescale $t_\mathrm{easy} $\,$=$\,$\Omega(n^{(\beta-1)/D})$\,$=$\,$\Omega(L)$, which corresponds to the timescale when interference between clusters become relevant \cite{Deshpande2018}.
When $\beta$\,$=$\,$1$, $\gamma_\mathrm{easy}$ becomes negative, and we instead have $t_\mathrm{easy} $\,$=$\,$ \Omega(\log n)$, matching the result in Ref.\,\cite{Muraleedharan2018} for nearest-neighbor hops and closely distributed initial states with $L$\,$=$\,$O(1)$.

\begin{thm}[Hardness result] \label{thm_hard} When $\alpha$\,$\geq$\,$D/2$, $V$\,$=$\,$\Omega(1)$, and $D$\,$\geq$\,$2$, the hardness timescale is $t_\mathrm{hard}$\,$=$\,$O(n^{\gamma^\mathrm{I}_\mathrm{hard}})$, where
\begin{align}
\gamma^\mathrm{I}_\mathrm{hard} = 
\begin{cases}
\frac{\beta-1}{D} \min[1,\alpha - D], & \alpha > D \\
0, & \alpha \in [\frac{D}{2},D].
\end{cases}
\label{eq_type1}
\end{align}
When $\alpha$\,$<$\,$D$\,$/$\,$2$, $V$\,$=$\,$o(1)$, or $D$\,$=$\,$1$, the timescale is $t_\mathrm{hard}$\,$=$\,$O(n^{\gamma^\mathrm{II}_\mathrm{hard}})$, where
\begin{align}
\gamma^\mathrm{II}_\mathrm{hard} = \delta +
\begin{cases}
\frac{\beta-1}{D} \min\left[1 + \frac{O(\log (V + 1))}{\log n} , \alpha - D \right], \quad \alpha > D \\
0, \quad \alpha \in [\frac{D}{2},D] \\
\frac{\beta}{D}\left(\alpha - \frac{D}{2}\right), \quad \alpha < D/2
\end{cases}
\label{eq_type2}
\end{align}
for an arbitrarily small $\delta>0$. 
\end{thm}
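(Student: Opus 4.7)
The plan is to prove hardness by embedding a known hard-to-sample quantum circuit family into the time evolution under $H$, choosing $J_{ij}(t)$ adversarially subject to $|J_{ij}(t)| \leq 1/d(i,j)^\alpha$. The simulated circuit is decomposed into (i) \emph{intra-cluster gates} that use the on-site interaction and short-range hops inside each cluster $C_i$ to operate on its $b_i = O(1)$ bosons, and (ii) \emph{inter-cluster entangling gates} realized via long-range hops combined with the fast state-transfer protocols of Refs.\,\cite{Guo2019a,Tran2020}. Summing the gate times across the $\poly(n)$-size circuit and substituting the inter-cluster distance $L = \Theta(n^{(\beta-1)/D})$ gives the claimed upper bound on $t_\mathrm{hard}$.

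\textbf{Type-I regime ($D \geq 2$, $V = \Omega(1)$, $\alpha \geq D/2$).} With interaction strength $\Omega(1)$ and $D \geq 2$, the Bose-Hubbard term supplies commuting-phase gates $e^{iVn_i(n_i-1)\tau/2}$ in $O(1)$ time, which combined with intra-cluster hopping suffice to embed IQP-style (or 2D-local random) circuits whose approximate sampling is $\sP$-hard under the standard conjectures of Refs.\,\cite{Bremner2011,Bremner2016,Bremner2017,Bermejo-Vega2018,Hangleiter2018,Haferkamp2019}. Inter-cluster entangling gates cost $\tilde O(L^{\min[1,\alpha-D]})$ time when $\alpha > D$ and $O(1)$ time when $\alpha \in [D/2, D]$, saturating the Lieb-Robinson-type bounds for power-law systems \cite{Tran2019,Tran2020}. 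Substituting $L$ reproduces Eq.~(\ref{eq_type1}).

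\textbf{Type-II regime.} When $V = o(1)$, $D = 1$, or $\alpha < D/2$, a universal gate set cannot be obtained cheaply inside the clusters, so we instead reduce to a Haar-random non-interacting \emph{boson sampling} distribution \cite{Aaronson2011} implemented by the hops $J_{ij}(t)$ alone. When $V \neq \Omega(1)$, accumulating a unit-strength nonlinear phase requires time $\sim 1/V$, which after exponentiation yields the additive $O(\log(V+1))/\log n$ correction to the exponent $\gamma^\mathrm{II}_\mathrm{hard}$. The additive $\delta > 0$ accommodates the $1/\poly(n)$ anticoncentration window required by the Aaronson-Arkhipov-style hardness conjecture. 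For the ultra-long-range regime $\alpha < D/2$, state transfer across distance $L$ can be performed in time $\sim L^{\alpha - D/2}$ up to sub-polynomial corrections \cite{Tran2020,Guo2019a}, producing the $(\beta/D)(\alpha - D/2)$ contribution after substituting $L$.

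\textbf{Main obstacle.} The central technical hurdle is controlling the errors induced by the always-on long-range tail of $H$: while one scheduled gate is being executed, the hops $\sum_{k\in C_i, l \in C_j} J_{kl}(t) a_k^\dagger a_l$ between all other pairs of clusters continue to act. Bounding the cumulative leakage in total variation distance by $O(1/\poly(n))$ over the full $\poly(n)$-gate runtime demands a number-conserving bosonic Lieb-Robinson-type decoupling estimate applied to every idle cluster and integrated over the entire evolution. The borderline values $\alpha \in \{D/2, D\}$ are the most delicate, since the best known state-transfer protocols only saturate the ultimate Lieb-Robinson-type rate up to $n^\delta$ factors --- which is precisely the origin of the $\delta > 0$ slack appearing in Eq.~(\ref{eq_type2}) but absent from Eq.~(\ref{eq_type1}).
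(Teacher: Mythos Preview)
Your overall strategy---embed a hard-to-sample circuit family into $H$ by choosing $J_{ij}(t)$ adversarially and using long-range state transfer---matches the paper's approach. However, several of the specific mechanisms you identify are incorrect, and your ``main obstacle'' reflects a misunderstanding of the model.

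\textbf{There is no always-on tail.} The couplings $J_{ij}(t)$ are \emph{fully time-dependent and controllable}; the constraint is only $|J_{ij}(t)|\leq d(i,j)^{-\alpha}$. When implementing a gate between two clusters, you simply set all other couplings to zero. No ``cumulative leakage'' estimate or Lieb--Robinson decoupling is needed for hardness; those tools enter only in the \emph{easiness} direction. Consequently your entire ``main obstacle'' paragraph is addressing a nonexistent problem.

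\textbf{Origin of $\delta$.} The $\delta>0$ in Eq.~(\ref{eq_type2}) does not come from anticoncentration windows or from state-transfer protocols failing to saturate Lieb--Robinson bounds. It arises because, in the type-II cases, the paper encodes only $n^\delta$ logical qubits (or performs boson sampling on only $n^\delta$ bosons). This trades circuit size for speed: the depth overhead from the 1D snake embedding, or from implementing the $n$ columns of the linear-optical unitary, drops to $n^{O(\delta)}$, and the resulting sampling problem is still hard as $n\to\infty$ for any fixed $\delta>0$.

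\textbf{The $O(\log(V{+}1))/\log n$ correction.} You attribute this to \emph{small} $V$ (time $\sim 1/V$ to build up a phase). It is the opposite: for \emph{large} $V$ the dual-rail entangling gate of Ref.~\cite{Underwood2012} has entangling power $\propto 1/V$, so one needs $O(V)$ repetitions, contributing an additive $\log V/\log n$ to the exponent. This is why $\gamma^{\mathrm{II}}_{\mathrm{hard}}$ diverges in the 1D nearest-neighbor hardcore limit, where the model becomes free fermions and is never hard.

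\textbf{The $\alpha<D/2$ exponent.} Substituting $L=\Theta(n^{(\beta-1)/D})$ into a state-transfer time $L^{\alpha-D/2}$ gives $(\beta-1)(\alpha-D/2)/D$, not the stated $(\beta/D)(\alpha-D/2)$. The correct scale here is not $L$ but $m=\Theta(n^\beta)$: the protocol of Ref.~\cite{Guo2019a} uses \emph{all} sites in the lattice as ancillas, giving state-transfer time $O(m^{\alpha/D-1/2})=O(n^{(\beta/D)(\alpha-D/2)})$.
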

We examine the various limits: $\alpha$\,$\rightarrow$\,$\infty$ (nearest-neighbor), $\alpha $\,$\to$\,$0$ (all-to-all connectivity), $V$\,$\to$\,$0$ (free bosons), and $V$\,$\to$\,$\infty$ (hardcore bosons).
First, when $\alpha$\,$\rightarrow$\,$\infty$, the hardness timescale upper bound is $O(L)$ in all cases except when $V$\,$\rightarrow$\,$\infty$, $D$\,$=$\,$1$, which we discuss below.
The timescale $O(L)$ again corresponds to the distance $L$ between clusters, matching the corresponding bound in \cref{eq_easytimescale}, and therefore pinning the transition timescale to $\Theta(L)$.
In the opposite limit when the model is sufficiently long-range ($\alpha $\,$<$\,$ D/2$), the role of the dimension is unimportant, giving $\gamma_\mathrm{hard}$\,$<$\,$0$ in all cases.
This suggests a hardness timescale close to 0, signifying the immediate onset of hardness.
Next, free bosons ($V$\,$=$\,$o(1)$) have almost the same hardness timescale (up to arbitrarily small $\delta$\,$>$\,$0$) as interacting bosons everywhere in the phase diagram.
This shows that the location of the complexity phase transition is robust to the presence of interactions.
In fact, the interaction strength $V$ does not affect the timescale except in the 1D nearest-neighbor hardcore limit.
In this case, there is no hardness regime, as seen through the divergence of $\gamma^\mathrm{II}_\mathrm{hard}$ in \cref{eq_type2} when $\alpha,V$\,$\to$\,$ \infty$.
This is because the model maps to that of free fermions, or equivalently, matchgate circuits, which are easy to simulate at all times \cite{Valiant2002,Terhal2002}.
We now outline the proofs of our results, whose details may be found in Ref.\,\cite{SM}.

\emph{Easy-sampling timescale.---} To derive $t_\mathrm{easy}$, we give an efficient sampling algorithm.
The algorithm performs time evolution on each cluster $C_i$ separately.
This takes polynomial time in the number of basis states, which is $\binom{\abs{C_i} + b_i - 1}{b_i} $\,$=$\,$ O(\abs{C_i}^{b_i})$ and hence polynomial in $n$ when $b_i$\,$=$\,$O(1)$.
This product-state approximation of the exact time-evolved state $\ket{\psi(t)} $\,$=$\,$ U_t \ket{\psi(0)}$ is achieved by decomposing the propagator $U_t$ via a spatial decomposition scheme for quantum simulation \cite{Haah2018,Tran2019} that we call the HHKL decomposition.
We complete the derivation of the easiness timescale by showing that the approximation is good for times $t$\,$<$\,$O(t_\mathrm{easy})$.

Here, we briefly present the HHKL decomposition, which is powerful but remarkably simple.
Let $H_R$ be the sum over all terms in the Hamiltonian supported completely in region $R$ and implicitly let $XY $\,$=$\,$ X$\,$\cup$\,$Y$ represent the union of regions.
The forward-time propagator is $U_{t_0,t_1}^{R} $\,$=$\,$ \mathcal{T} \exp(- i \int_{t_0}^{t_1} H_R(s) ds)$.
The decomposition scheme approximates a unitary acting on region $XYZ$ (where $Y$ separates regions $X$ and $Z$) by forward evolution on $YZ$, backward evolution on $Y$, and forward evolution on $XY$: $U_{XYZ}$\,$\approx$\,$U_{XY}(U_Y)^\dag U_{YZ}$.
The operator norm error made by this approximation is \cite{Tran2019} $O \left((e^{vt} - 1) \Phi(X)({\ell^{-\alpha + D + 1}}\right.$\,$+$\,$\left.e^{-\ell})\right)$, where $v$\,$>$\,$0$ is a characteristic velocity, $\Phi(X)$ is the area of the boundary of $X$, and $\ell$ is the minimum distance between any pair of sites in $X$ and $Z$.
The error is small for times $t$ shorter than the time it takes for information to propagate from $X$ to $Z$.

The velocity $v$ of information propagation is also known as a Lieb-Robinson velocity and is determined by the operator norm of terms in the Hamiltonian which couple different sites \cite{Hastings2006}.
Since bosonic operators have unbounded operator norm, this could result in an unbounded velocity \cite{Eisert2009}.
However, because of boson number conservation under the Hamiltonian, the dynamics is fully contained in the $n$-boson subspace, within which the operator norm of each term is $O(n)$.
While free bosons ($V$\,$=$\,$0$) behave as in the single-particle subspace, implying the Lieb-Robinson velocity is $O(1)$, in the interacting case, an $O(n)$ Lieb-Robinson velocity would cause the asymptotic easiness timescale to vanish ($t_{\mathrm{easy}}$\,$\rightarrow$\,$0$).

Nevertheless, the easiness timescale we derive is independent of $V$ for a clustered initial state.
Intuitively, at short times each boson is well-localized within its original cluster.
Therefore, the relevant subspace has at most $b$ bosons in each cluster $C_i$.
Truncating the Hilbert space to allow only $b+1$ bosons per cluster is therefore a good approximation at short times \cite{SM,Peropadre2017}, and the truncation error vanishes in the asymptotic limit.
The modified Hamiltonian $H'$ after truncation has terms with norm only $O(b)$, giving an effective Lieb-Robinson velocity $v$\,$=$\,$O(b)$\,$=$\,$O(1)$ for states close to the initial state \cite{Note2}.
For this modified Hamiltonian, we apply the HHKL decomposition to bound the error caused by simulating each cluster separately.
Once the error has been calculated, the timescale immediately follows by solving $\epsilon(t) $\,$=$\,$O(1)$ for $t$\,$=$\,$t_{\mathrm{easy}}$, which is a lower bound on the transition timescale $t_*$.
In Ref.\,\cite{SM}, we give the full dependence of $t_{\mathrm{easy}}$ on various system parameters, including the filling fraction of bosons.

\emph{Sampling hardness timescale.---}
To derive $t_\mathrm{hard}$, we give protocols to simulate quantum circuits by setting the time dependent parameters $J_{ij}(t)$ of the long-range bosonic Hamiltonian.
This implies sampling is worst-case hard after time $t_\mathrm{hard}$.
Specifically, if a general sampling algorithm exists for times $t$\,$\geq$\,$ t_\mathrm{hard}$, we prove this algorithm can also simulate hard instances of boson sampling \cite{Aaronson2011} when interactions are weak, and quantum circuits that are hard to simulate \cite{Bermejo-Vega2018} when interactions are strong.

In the interacting case, our reduction from universal quantum computation to a long-range Hamiltonian hinges on implementing a universal gate set.
Using a dual-rail encoding to encode a qubit in two modes of each cluster $C_i$, we show in Ref.\,\cite{SM} how to implement arbitrary single-qubit operations in $O(1)$ time and controlled-phase gates \cite{Underwood2012} between adjacent clusters in a time that depends on their spacing $L$.
The two-qubit gate uses free particle state-transfer as a subroutine \cite{Guo2019a,Tran2020} to bring adjacent logical qubits near each other.
We implement the constant-depth circuit of Ref.\,\cite{Bermejo-Vega2018}, which consists only of nearest-neighbor gates between qubits in a 2D grid.
The total time for hardness under this scheme takes time $O(\min$\,$[L$\,$,$\,$L^{\alpha - D}])$ when $\alpha $\,$>$\,$ D$ and $O(1)$ when $\alpha$\,$\in$\,$[D/2,D]$.
In 1D, simulating a 2D circuit introduces extra overhead.
Nevertheless, we can recover the same timescale up to an infinitesimal $\delta$\,$>$\,$0$ in the exponent by only encoding $n^\delta$ logical qubits.
For hardcore bosons, the above scheme mentioned does not work and the entangling gate is constructed differently, and features an easiness result for the 1D nearest-neighbor case.
Lastly, when $\alpha$\,$<$\,$D$\,$/$\,$2$, state transfer takes time $o(1)$ but the time for an entangling gate is $O(1)$.
We can still achieve coarse hardness for time $o(1)$ by mapping the system onto free bosons, which we now come to.

In the noninteracting case, we implement the boson sampling scheme of Ref.\,\cite{Aaronson2011}, which showed that a Haar-random unitary applied to $m$ sites containing $n$ bosons gives a hard-to-sample state.
It also gave an $O(n\log m)$-depth decomposition of a linear-optical unitary in the circuit model without spatial locality.
We give a faster implementation for the continuous-time Hamiltonian model, which can include simultaneous noncommuting terms but imposes spatial locality, a result of independent interest \cite{SM}.
Specifically, we show that most linear-optical states of $n$ bosons on $m$ sites can be constructed in time $\min$\,$[O(nm^{1/D})$\,$,$\,$\tilde{O}(nm^{\alpha/D - 1/2}])$, which is faster than the circuit model when $\alpha$\,$<$\,$D$\,$/$\,$2$.
This result also uses free-particle state transfer as a subroutine.
As in the 1D interacting case, we can implement the reduction on a polynomially growing number of bosons $n^\delta$, resulting in the timescale of \cref{eq_type2} for free bosons.
This result resolves an important conceptual question posed by Ref.\,\cite{Deshpande2018} for the noninteracting, nearest-neighbor case by closing the gap between $t_\mathrm{easy}$ and $t_\mathrm{hard}$.
In this limit, the transition timescale is at $\Theta(L/v)$, both with and without interactions, showing that the algorithm of Ref.\,\cite{Deshpande2018} is optimal and that the presence of interactions does not change the phase diagram.

\emph{Sharp and coarse transitions---}
We now discuss the role of the term $\delta$ in \cref{eq_type2}.
This infinitesimal is suggestive of a coarse transition, because it ensures $\gamma_{\mathrm{hard}}$\,$>$\,$\gamma_*$ \cite{note_coarsecaveat}.
Therefore, our results suggest that the main difference between the interacting and noninteracting models is the type of transition induced.
In the presence of interactions and in dimensions 2 and above, the bounds on the timescale in the nearest-neighbor limit coincide at $t_*$\,$=$\,$\Theta(L)$, proving sharpness of the transition.
In the 1D/noninteracting case, however, our results suggest that the transition is coarse.
In the noninteracting case specifically, our work implies that either the transition is coarse, or there exists a constant-depth boson sampling circuit for which approximate sampling is classically hard.
Both of these possibilities are interesting in their own right, but we believe the first is more likely to be true.

For $D$\,$=$\,$1$, the transition is coarse when $\alpha$\,$\to$\,$\infty$.
One way of understanding this is from tensor-network algorithms like matrix product states to simulate the problem, which work well for systems with area-law entanglement.
For this specific case ($\alpha$\,$\to$\,$\infty$, $D$\,$=$\,$1$), we can use the fact that time evolution is classically simulable for any logarithmic time \cite{Osborne2006} by exploiting the connection to matrix product states.
In our setup, this translates to an easiness timescale of $cL$\,$\mathrm{polylog}L$ for any $c$ and any polylogarithmic function, which is consistent with the hardness timescale being $L^{1+\delta}$ for any $\delta$\,$>$\,$0$.
Therefore the transition is coarse, and $t_{\mathrm{hard}}$\,$=$\,$\omega(n^{\gamma_{\mathrm{easy}}})$.
However, if $D$\,$\geq$\,$2$, this argument breaks down because tensor-network contraction takes time exponential in the system size in the worst case \cite{Schuch2007a}, and there are known examples of constant-depth 2D circuits that are hard to simulate \cite{Bermejo-Vega2018,Bremner2017}.

\textit{Outlook.---}
We have mapped out the complexity of the long-range Bose-Hubbard model as a function of the particle density $\beta$, the degree of locality $\alpha$, the dimensionality $D$, and the evolution time $t$.
A particularly interesting open question concerns the regions of the phase diagram without definitive easiness/hardness results.
These gaps are closely related to open problems in other areas of many-body physics and quantum computing.
In the nearest-neighbor limit, there is no gap between $t_\mathrm{easy}$ and $t_\mathrm{hard}$.
When $\alpha$ is finite, closing the gap is closely tied to finding state-transfer protocols which saturate Lieb-Robinson bounds.
Stronger Lieb-Robinson bounds can increase $t_\mathrm{easy}$, and faster state-transfer will reduce $t_\mathrm{hard}$, as evidenced by the improvement over the previous version of this manuscript due to results from Ref.\,\cite{Tran2020}.
These observations show that studying complexity phase transitions provides a nice testbed for, and gives an alternative perspective on results pertaining to the locality of quantum systems.

It is illuminating to study the approach to the transition from either regime.
On the easiness side, the error made in the HHKL decomposition algorithm grows with time until it reaches $O(1)$ at time $t_*$.
On the hardness side, the transition behaves qualitatively differently for sharp and coarse transitions.
For coarse transitions, as the evolution time is reduced to $t_*$, the number of encoded logical qubits shrinks as $n^\delta$, where $\delta $\,$\rightarrow$\,$ 0$ as $t$\,$\rightarrow$\,$ t_*$.
This illustrates that while the problem is still asymptotically hard as $n $\,$\rightarrow$\,$ \infty$, one needs to go to higher boson numbers $n$ to achieve the same computational complexity.
For sharp transitions on the other hand, the number of encoded logical qubits seems to behave as $\delta$\,$\poly(n)$.
This illustrates a physical difference between the two types of computational phase transitions near the transition point, hinting at a rich variety of possibly undiscovered complexity phase diagrams.

Our results can be easily adapted to a wide range of experimentally and theoretically interesting Hamiltonians.
Spin Hamiltonians naturally map onto our model in the hardcore limit.
Fermionic systems with nearest-neighbor interactions can also be incorporated by performing the mapping described in Ref.\,\cite{Verstraete2005}.
Our model is also relevant to cold atom experiments that have been proposed as candidates for observing quantum computational supremacy \cite{Muraleedharan2018,Norcia2018,Bermejo-Vega2018,Neill2018}, especially in the nearest-neighbor limit.
The power-law hopping $1/r^\alpha$ can be engineered to directly implement the classes of Hamiltonians we study.
We can also virtually couple our band of interest to another with a quadratic band edge to implement exponentially decaying hopping \cite{Douglas2015,Chu2019,Gadway2015}.
Doing this simultaneously with multiple detunings approximates a power-law with high accuracy as a sum of exponentials \cite{Crosswhite2008}.
In the hardcore limit, the long-range hops translate to long-range interactions between spins, which model quantum-computing platforms such as Rydberg atoms and trapped ions \cite{Saffman2010,Bernien2017,Barredo2016,Korenblit2012,Islam2013}.
Therefore, the Hamiltonian we study models various physically interesting situations, both in the several limiting cases ($\alpha$\,$\rightarrow$\,$\infty$, $V$\,$\rightarrow$\,$0$, $V$\,$\rightarrow$\,$\infty$) as well as in the general case of finite nonzero $\alpha$ and $V$.
Furthermore, our methods also work for general number-conserving Hamiltonians, for example, long-range density-density interactions $K_{ij}(t)$\,$n_in_j$ with nearest-neighbor hops.
The only effect on the easiness times is to modify the Lieb-Robinson velocity to $v$\,$=$\,$O(b^2)$.

Our model can also describe a distributed modular quantum network when $V$ can vary spatially.
Specifically, a module of qubits can be represented by hardcore bosons ($V$\,$\rightarrow$\,$\infty$), while photonic communication channels linking distant modules can be represented by sites with $V$\,$=$\,$0$ separating the modules.
As in quantum networks, our hardness times in the nearest-neighbor regime are dominated by gates between nodes, while operations within a single node are free. 

There is also an intriguing connection between the (dynamical) phase transitions we study as a function of time and (equilibrium) phase transitions as a function of temperature.
Interacting bosons in 2D and above feature sharp transitions, falling into one ``universality class'' separate from that of free bosons and 1D.
This is reminiscent of equilibrium phase transitions where the universality class depends strongly on the dimension and on the nature of interactions.
This connection may be further investigated by studying complexity phase transitions in thermal states as a function of temperature \cite{Bravyi2008a,Poulin2009,Chiang2010,Temme2011,Kastoryano2016,Brandao2019a,Harrow2019a,Kuwahara2019b,Kato2019}.

\begin{acknowledgments}
\emph{Acknowledgments.---}
We thank Michael Foss-Feig, James Garrison, Dominik Hangleiter, Rex Lundgren, and Emmanuel Abbe for helpful discussions, the anonymous Referee for their valuable comments, and to the authors of Ref.\,\cite{Guo2019a} for sharing their results with us.
N.~M., A.~D., M.~C.~T., A.~E., and A.~V.~G.~acknowledge funding by DoE ASCR FAR-QC (award No. DE-SC0020312), NSF PFCQC program, DoE BES Materials and Chemical Sciences Research for Quantum Information Science program (award No. DE-SC0019449), the DoE ASCR Quantum Testbed Pathfinder program (award No. DE-SC0019040), AFOSR MURI, AFOSR, ARO MURI, ARL CDQI, and NSF PFC at JQI.
M.~C.~T.\ also acknowledges support under the NSF Grant No.\ PHY-1748958 and from the Heising-Simons Foundation.
N.~M.\ also acknowledges funding from the Caltech SURF program.
A.~E.\ also acknowledges funding from the DoD.
B.~F.\ is funded in part by AFOSR YIP No.\ FA9550-18-1-0148 as well as ARO Grants No.\ W911NF-12-1-0541 and No.\ W911NF-17-1-0025, and NSF Grant No.\ CCF-1410022.
\end{acknowledgments}

\begin{widetext}
\renewcommand{\thesection}{S\arabic{section}}
\renewcommand{\theequation}{S\arabic{equation}}
\renewcommand{\thefigure}{S\arabic{figure}}
\setcounter{equation}{0}
\setcounter{figure}{0}

\section*{Supplemental material}
\textbf{Abstract:} In this Supplemental Material, we give the full proofs of \cref{thm_easy,thm_hard}.

\section{Section I: \qquad Approximation error under HHKL decomposition}
We first argue why it is possible to apply the HHKL decomposition lemma to $H'$ with a Lieb-Robinson velocity of order $O(1)$.
As mentioned in the main text, $H'$ is a Hamiltonian that lives in the truncated Hilbert space of at most $b +1$ bosons per cluster.
Let $Q$ be a projector onto this subspace.
Then $H' = QHQ$.
Time-evolution under this modified Hamiltonian $H'$ keeps a state within the subspace since $\comm{e^{-iQHQt}}{Q} = 0$.

The Lieb-Robinson velocity only depends on the norm of terms in the Hamiltonian which couple lattice sites.
On-site terms do not contribute, which can be seen by moving to an interaction picture \cite{Foss-Feig2015,Hastings2006}. 
Therefore, since no state has more than $b+1$ bosons on any site within the image of $Q$, the maximum norm of coupling terms in $H'$ is $\norm{Qa_i^\dag a_jQ} \leq b + 1$. Therefore, the Lieb-Robinson velocity is at most $O(b)$ instead of $O(n)$, and we can apply the HHKL decomposition to the evolution generated by the truncated Hamiltonian $H'$.
We now prove that the error made by decomposing the evolution due to $H'$ is small.
\begin{lemma}[Decomposition error for $H'$] \label{lem_easyQHQ}
For all $V$ and $\alpha $\,$>$\,$ D+1$, the error incurred (in 2-norm) by decomposing the evolution due to $H'$ for time $t$ is
\begin{align}
 \epsilon(t) & \leq O \left(K(e^{v{t_1}} - 1) (\ell^{-\alpha+D+1} + e^{- \ell})\sum_{j=0}^{N-1} (r_0+j\ell)^{D-1}\right),
\end{align}
where $N $\,$=$\,$ t/t_1$ and $\ell $\,$\leq$\,$ L/N$ can be chosen to minimize the error, and $r_0$ is the radius of the smallest sphere containing the initially occupied bosons in a cluster.
\end{lemma}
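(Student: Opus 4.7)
The plan is to bound the total decomposition error by iterating the single-step HHKL bound recalled in the main text, then to use the triangle inequality both over the $K$ clusters and over the $N$ time steps into which we chop the interval $[0,t]$. Throughout I work in the truncated Hilbert space, where the preamble of the section has already given us a state-independent Lieb–Robinson velocity of order $v = O(b)$ for $H'$, so the per-step HHKL bound $O\bigl((e^{vt_1}-1)\Phi(X)(\ell^{-\alpha+D+1}+e^{-\ell})\bigr)$ of Ref.~\cite{Tran2019} applies directly.

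First I would set up a family of nested regions around each cluster $C_i$. Let $B_i^{(0)}$ be the smallest ball of radius $r_0$ containing the initially occupied sites of $C_i$, and define $B_i^{(j)}$ by inflating $B_i^{(0)}$ by $j\ell$ in each direction, so $B_i^{(j)}$ has radius $r_0 + j\ell$ and surface area $\Phi(B_i^{(j)}) = \Theta((r_0+j\ell)^{D-1})$. Choose $\ell$ so that $N\ell \leq L$; then the balls $B_i^{(N)}$ remain disjoint, and the complement of any $B_i^{(j)}$ still contains all of the other clusters separated by a buffer of width at least $\ell$.

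Next I would iterate the HHKL decomposition in the standard ``onion-peeling'' fashion. On the first time slice of length $t_1$, apply the HHKL lemma $K$ times, once per cluster, to factor $U^{[0,t_1]}_{H'}$ into a product of unitaries supported inside the $B_i^{(1)}$ (with backward/forward evolutions on the thin shells $B_i^{(1)}\setminus B_i^{(0)}$). Each application peels off an annulus of width $\ell$ between a region of outer radius $r_0$ and the rest of the lattice, contributing an error of order $(e^{vt_1}-1)\,(r_0)^{D-1}\,(\ell^{-\alpha+D+1}+e^{-\ell})$. On the $j$-th time slice (for $j=1,\dots,N-1$) we similarly factor $U^{[jt_1,(j+1)t_1]}_{H'}$ into unitaries supported in the $B_i^{(j+1)}$, each application now acting on a surface of area $\Theta((r_0+j\ell)^{D-1})$ and contributing a corresponding error. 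Because unitaries are 2-norm contractions, concatenation of $N$ slices and $K$ clusters gives a total 2-norm error bounded by
\begin{equation*}
\epsilon(t) \;\leq\; O\!\left(K(e^{vt_1}-1)(\ell^{-\alpha+D+1}+e^{-\ell})\sum_{j=0}^{N-1}(r_0+j\ell)^{D-1}\right),
\end{equation*}
which is exactly the claim.

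The main obstacle I anticipate is the bookkeeping needed to show that this nested onion-peeling is consistent: namely, that at time step $j$ the only ``surviving'' boundary one has to peel is that of $B_i^{(j)}$, so the HHKL regions $X,Y,Z$ used at step $j$ can genuinely be chosen to have separation $\ell$ and boundary area $\Theta((r_0+j\ell)^{D-1})$ without interference from previously-factored regions. This requires verifying that the unitaries produced at earlier steps on $B_i^{(j-1)}$ commute (or are supported strictly inside) the region used at step $j$, so that the HHKL error estimate for the new peel uses only the current cluster's surface and not cumulative interior data. Given this, the convergence condition $\alpha > D+1$ enters precisely to make $\ell^{-\alpha+D+1}\to 0$ as $\ell$ grows, so that for an appropriate choice $\ell = \Theta(L/N)$ the sum in the bound can subsequently be balanced to give $\epsilon(t)=O(1)$ at $t=t_{\mathrm{easy}}$ in the downstream argument.
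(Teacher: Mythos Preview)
Your proposal is correct and mirrors the paper's proof essentially step for step: the same nested balls $B_i^{(j)}$ of radius $r_0+j\ell$, the same slicing into $N$ time steps of length $t_1$, and the same $K$-fold application of the HHKL lemma per slice with errors summed via the triangle inequality. The ``obstacle'' you anticipate is resolved in the paper exactly as you suggest---the backward shell unitaries and the complement unitaries at each step are spatially disjoint from the creation operators (supported on $B_i^{(j)}$), so they commute past them and annihilate on the vacuum, leaving only the forward evolutions on the inflated balls.
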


\begin{figure}[b]
\includegraphics[width=0.5\linewidth]{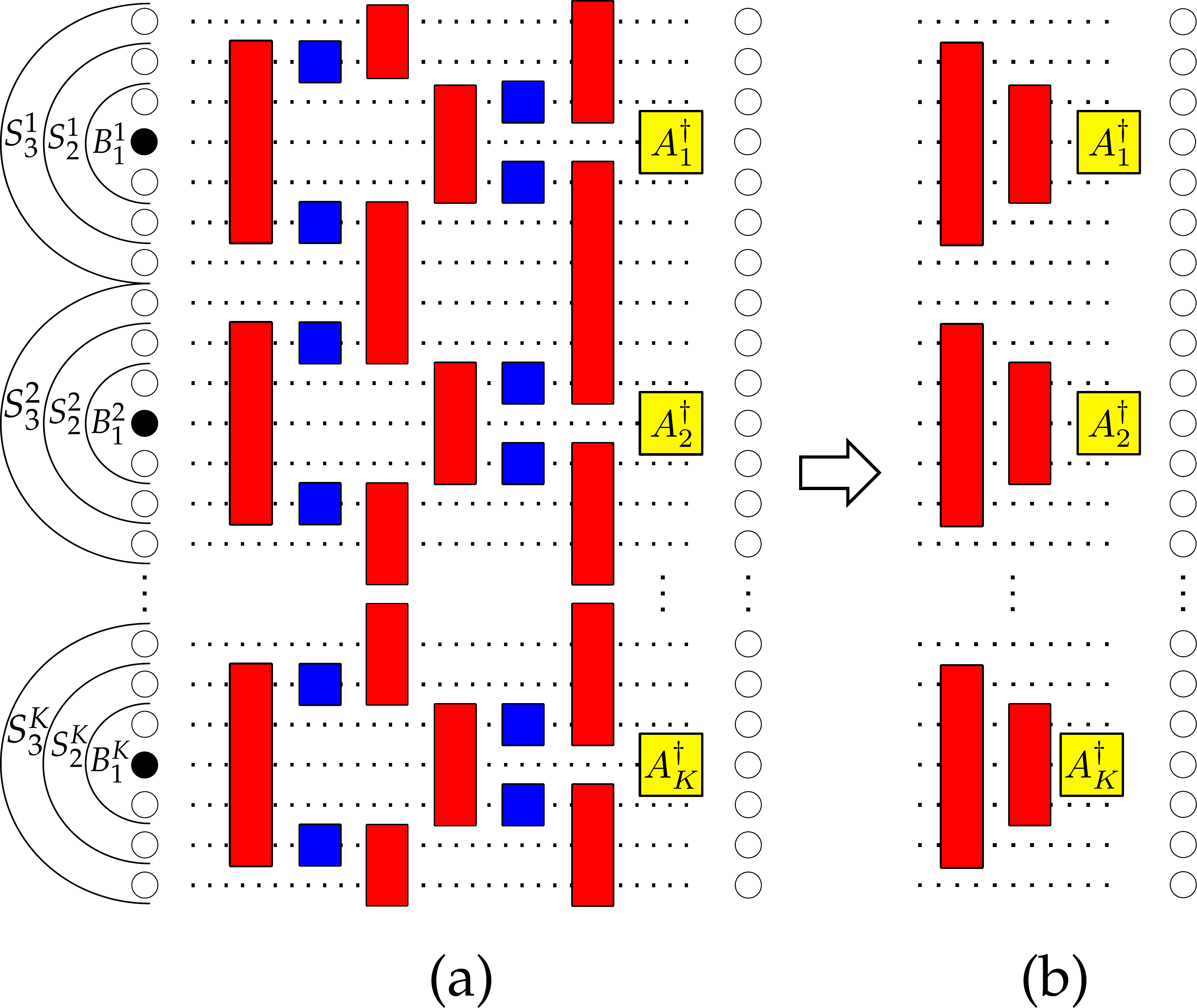}
\caption{(a) Decomposition of the first two steps of the unitary evolution followed by (b) pushing the commuting terms past $A_i^\dagger$ (the product of all initial creation operators in a cluster $i$) to the vacuum.
Red boxes represent forward evolution and blue boxes backward evolution in time.} \label{fig_decompos}
\end{figure}

The sketch of the proof is as follows: recall that within each cluster $C_i$, there is a group of bosons initially separated from the edge of the cluster by a region of width $L_i$.
Naive application of the HHKL decomposition for the long-range case results in a timescale $t_{\mathrm{easy}}$\,$\sim$\,$\log(L)$, because of the exponential factor $(e^{vt}-1)$ in the error.
To counter this, we apply the HHKL decomposition in small time-steps $t_1$.
Thus, within each time-step, the exponential factor can be approximated as $e^{v t_1}-1 $\,$\approx$\,$v t_1$, turning this exponential dependence into a polynomial one at the cost of an increased number of time-steps.

The first two time-steps are depicted pictorially in \cref{fig_decompos}, and illustrate the main ideas. 
The full propagator acting on the entire lattice is decomposed by applying the HHKL decomposition $K$ times, such that two of every three forward and reverse time-evolution operators commute with all previous operators by virtue of being spatially disjoint, allowing them to be pushed through and act identically on the vacuum.
The remaining forward evolution operator effectively spreads out the bosonic operators by distance $\ell$.
The error per time-step is polynomially suppressed by $O(\ell^{-\alpha+D+1} + e^{-\gamma \ell})$.

While it reduces the exponential factor to a polynomial one, using time-slices comes at the cost of extra polynomial factors, originating from the sum over boundary terms $\sum_{j=0}^{N-1} (r_0 + j\ell)^{D-1}$. 

\begin{proof}[Proof of \cref{lem_easyQHQ}]
Let the initial positions of the bosons be denoted by $(\mathsf{in}_1$\,$,$\,$\ldots$\,$,$\,$\mathsf{in}_n)$.
The initial state is $\ket{\psi(0)} = a_{\mathsf{in}_1}^\dag \ldots a_{\mathsf{in}_n}^\dag \ket{0}$
As before, the first two time-steps are illustrated in Fig.\,\ref{fig_decompos}.
Within each cluster $C_i$, there is a group of bosons initially separated from the edge of the cluster by a region of width $L_i$.
Let $A_i^\dag(0) = \prod_{\mathsf{in}_j \in C_i} a_{\mathsf{in}_j}^\dag$ be the creation operator for the group of bosons in the cluster $C_i$.
The initial state is $\ket{\psi(0)} = \prod_{i=1}^K A_i^{\dagger}(0) \ket{0}$.
When evolved for short times, each creation operator $a^\dagger_{\mathsf{in}_i}(t)$ is mostly supported over a small region around its initial position.
Therefore, as long as these regions do not overlap, each operator approximately commutes, and the state is approximately separable.

Let $A^i$ be the smallest ball upon which $A_i^\dagger(0)$ is supported.
Let $B^i_0 = A^i$ and denote its radius $r^i_0$, and define $r_0 = \max r^i_0$.
$B^i_k$ is a ball of radius $r^i_0 + k \ell$ containing $A^i$, where $\ell$ will be chosen to minimize the error.
$S^i_k$ is the shell $B^i_k \setminus B^i_{k-1}$ (see \cref{fig_decompos}).
The complement of a set $X$ is denoted as $X^c$.
We divide the evolution into $N$ time steps between $t_0=0$ and $t_N = t$, and first show that the evolution is well-controlled for one time step from 0 to $t_1 = t/N$.
We apply this decomposition $K$ times, once for each cluster, letting $X = B^i_0$, $Y = S^i_{1}$ and $Z$ be everything else:
\begin{align}
 U_{0,t_1} &\approx U_{0,t_1}^{B^1_1} (U_{0,t_1}^{S^1_1})^\dagger U_{0,t_1}^{(B^1_0)^c} \\
 &\approx U_{0,t_1}^{B^1_1} (U_{0,t_1}^{S^1_1})^\dagger U_{0,t_1}^{B^2_1} (U_{0,t_1}^{S^2_1})^\dagger U_{0,t_1}^{(B^1_0 B^2_0)^c} \\
 &\approx U_{0,t_1}^{B^1_1} (U_{0,t_1}^{S^1_1})^\dagger \ldots U_{0,t_1}^{B^K_1} (U_{0,t_1}^{S^K_1})^\dagger U_{0,t_1}^{(B^1_0 \ldots B^K_0)^c}.
\end{align}
The total error is $O \left( \sum_{i=1}^K (e^{v{t_1}} - 1) \Phi(B^i_0) (\ell^{-\alpha+D+1} + e^{-\gamma \ell}) \right)$ = $O \left(K(e^{v{t_1}} - 1) r_0^{D-1} (\ell^{-\alpha+D+1} + e^{-\gamma \ell}) \right)$.
Applying the decomposed unitary to the initial state and pushing commuting terms through to the vacuum state, we get
\begin{align*}
U_{0,t_1} \ket{\psi(0)} \approx U_{0,t_1}^{B^1_1} A_{1}^\dagger \ldots U_{0,t_1}^{B^K_1} A_{K}^\dagger \ket{0} = \left(\prod_{i=1}^K U_{0,t_1}^{B^i_1} A_{i}^\dagger \right) \ket{0}.
\end{align*}
We can repeat the procedure for the unitary $U_{t_1,t_2}$, where $t_2=2t_1$.
Now, the separating region $Y$ will be $S^i_2$, so that $S^i_2 \cap B^i_1 = \emptyset$.
Each such region still has width $\ell$, but now the boundary of the interior is $\Phi(B^i_1) = O((r_0+\ell)^{D-1})$.
We get
\begin{align}
U_{t_1, t_2} \approx \left(\prod_{i=1}^K U_{t_1, t_2}^{B^i_2}(U_{t_1, t_2}^{S^i_2})^\dagger \right) U_{t_1, t_2}^{(B^1_1 \ldots B^K_1)^c},
\end{align}
with error $O(K(e^{vt_1}-1)(r_0+l)^{D-1}(\ell^{-\alpha+D+1} + e^{-\gamma \ell}))$.
The unitaries supported on $S_2^i$ and $(B_1^1\ldots B_1^K)^c$ commute with all the creation operators supported on sites $B^i_1$, giving 
$\ket{\psi(t_2)} \approx U^{B^1_2}_{t_1, t_2} U^{B^1_1}_{0, t_1} \ldots U^{B^K_2}_{t_1, t_2}U^{B^K_1}_{0, t_1} \ket{\psi(0)}$. 
By applying this procedure a total of $N$ times, once for each time step, we get the approximation $U_{0, t_N}\ket{\psi(0)} \approx U^{B^1_N}_{t_{N-1}, t_N} \ldots U^{B^1_1}_{0, t_1} \ldots U^{B^K_N}_{t_{N-1}, t_N} \ldots U^{B^K_1}_{0, t_1} \ket{\psi(0)}$. 
The total error in the state (in 2-norm) is
\begin{align}
\epsilon &\leq O \left(K(e^{v{t_1}} - 1) (\ell^{-\alpha+D+1} + e^{-\gamma \ell}) \sum_{j=0}^{N-1} (r_0+j\ell)^{D-1}\right) \\
&= O \left( n(e^{v{t_1}} - 1) (\ell^{-\alpha+D+1} + e^{-\gamma \ell}) N L^{D-1} \right), \label{eq_toterr}
\end{align}
proving \cref{lem_easyQHQ}.
The last inequality comes from the fact that $K\leq n$ and that $r_0+(N-1)\ell \leq \min L_i = L$.
The latter condition ensures that the decomposition of the full unitary is separable on the clusters.

In the regime $\alpha $\,$>$\,$ 2D +D/(\beta-1)$, $t_\mathrm{easy}$ is optimized by choosing a fixed time-step size $t_1 $\,$=$\,$ O(1)$.
Then, the number of steps $N$ scales as the evolution time $N$\,$=$\,$t/t_1$.
By the last few time-steps, the bosonic operators have spread out and have a boundary of size $L^{D-1}$, so the boundary terms contribute $O(NL^{D-1})$ in total.
In the regime $D+1 $\,$<$\,$ \alpha $\,$\leq$\,$ 2D +D/(\beta-1)$, the boundary contribution outweighs the suppression $\ell^{-\alpha+D+1}$.
Instead, we use a single time-step in this regime, resulting in $t_\mathrm{easy}=\Omega(\log n)$.

\end{proof}

\section{Section II: \qquad Closeness of evolution under $H$ and $H'$.}
Next, we show that the states evolving due to $H$ and $H'$ are close, owing to the way the truncation works.
This will enable us to prove that the easiness timescale for $H$ is the same as that of $H'$.
Suppose that an initial state $\ket{\psi(0)}$ evolves under two different Hamiltonians $H(t)$ and $H'(t)$ for time $t$, giving the states $\ket{\psi(t)}=U_t\ket{\psi(0)}$ and $\ket{\psi'(t)}=U'_t \ket{\psi(0)}$, respectively.
Define $\ket{\delta(t)} = \ket{\psi(t)} - \ket{\psi'(t)}$ and switch to the rotating frame, $\ket{\delta^r(t)} = U^\dag_t \ket{\delta(t)} = \ket{\psi(0)}- U_t^\dag U'_t \ket{\psi(0)}$.
Now taking the derivative,
\begin{align}
i \partial_t \ket{\delta^r(t)} &= 0 + U_t^\dag H(t) U'_t \ket{\psi(0)} - U_t^\dag H'(t) U'_t \ket{\psi(0)} \\
&= U_t^\dag (H(t) - H'(t)) \ket{\psi'(t)}.
\end{align}
The first line comes about because $i \partial_t U'_t = H'(t) U'_t$ and $i \partial_t U^\dag_t = -U^\dag_t H(t)$, owing to the time-ordered form of $U_t$.

Now, we can bound the norm of the distance, $\delta(t) := \norm{\ket{\delta(t)}} = \norm{\ket{\delta^r(t)}}$.
\begin{align}
\delta(t) &\leq \delta(0) + \int_0^t d\tau \norm{(H(\tau) - H'(\tau)) \ket{\psi'(\tau)}} \\
&= \int_0^t d\tau \norm{(H(\tau) - H'(\tau)) \ket{\psi'(\tau)}} \label{eq_asfaraswecango},
\end{align}
since $\delta(0)=0$.

The next step is to bound the norm of $(H-H')\ket{\psi'(\tau)}$ (we suppress the time label $\tau$ in the argument of $H$ and $H'$ here and below).
We use the HHKL decomposition: $\ket{\psi'(\tau)} = \ket{\phi(\tau)} + \ket{\epsilon(\tau)}$, where the state $\ket{\phi(\tau)}$ is a product state over clusters, and $\ket{\epsilon(\tau)}$ is the error induced by the decomposition.
We first show that $(H-H')\ket{\phi(\tau)} = 0$.
Since $\ket{\phi(\tau)}$ is a product state of clusters, each of which is time-evolved separately, boson number is conserved within each cluster.
Therefore, each cluster has at most $b$ bosons, and $Q\ket{\phi(\tau)} = \ket{\phi(\tau)}$. 
Furthermore, only the hopping terms in $H$ can change the boson number distribution among the different clusters, and these terms move single bosons.
This implies that $H\ket{\phi(\tau)}$ has at most $b+1$ bosons per cluster, and remains within the image of $Q$, denoted $\mathrm{im} \ Q$.
Combining these observations, we get $H'\ket{\phi(\tau)} = QHQ\ket{\phi(\tau)} = H\ket{\phi(\tau)}$.
This enables us to say that $(H-H')\ket{\phi(\tau)} = (H-QHQ)\ket{\phi(\tau)} = 0$.
\Cref{eq_asfaraswecango} gives us 
\begin{align}
\delta(t) &\leq \int_0^t d\tau \norm{(H(\tau) - H'(\tau)) (\ket{\phi(\tau)} + \ket{\epsilon(\tau)})}  \\
&= \int_0^t d\tau \norm{(H(\tau) - H'(\tau)) \ket{\epsilon(\tau)}}, \\
&\leq \max_{\tau, \ket{\eta} \in \mathrm{im\ } Q}\norm{(H(\tau) - H'(\tau))\ket{\eta}} \int_0^t d\tau \norm{\ket{\epsilon(\tau)}}.
\end{align}
In the last inequality, we have upper bounded $\norm{(H(\tau) - H'(\tau)) \ket{\epsilon(\tau)}}$ by $\max_{\ket{\eta} \in \mathrm{im\ } Q}\norm{(H - H')\ket{\eta}} \times {\epsilon(\tau)}$, where $\epsilon(\tau) := \norm{\ket{\epsilon(\tau)}}$.
The quantity $\max_{\ket{\eta} \in \mathrm{im\ } Q}\norm{(H - H')\ket{\eta}} $ can be thought of as an operator norm of $H-H'$, restricted to the image of $Q$.
It is enough to consider a maximization over states $\ket{\eta}$ in the image of $Q$ because we know that the error term $\ket{\epsilon(\tau)}$ also belongs to this subspace, as $\ket{\psi'(\tau)}$ belongs to this subspace.
Further, we give a uniform (time-independent) bound on this operator norm, which accounts for the maximization over times $\tau$.

\begin{lemma}
 $\max_{\ket{\eta}} \norm{(H-QHQ)\ket{\eta}} \leq \norm{\sum_{i \in C_k,j \in C_l} J_{ij}a_i^\dag a_j} \leq O(b L^{D-\alpha})$.
\end{lemma}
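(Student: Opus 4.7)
The plan is to reduce the quantity $\|(H-QHQ)\ket{\eta}\|$ to the norm of only the inter-cluster hopping part of $H$, and then bound that norm by combining a bosonic enhancement factor with a power-law geometric sum.

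First, for any $\ket{\eta} \in \mathrm{im}\,Q$, I would write $(H-QHQ)\ket{\eta} = (1-Q)H\ket{\eta}$ and identify which terms of $H$ can produce components lying outside $\mathrm{im}\,Q$. The on-site interactions $f(n_i)$ are diagonal in the number basis, so they commute with $Q$ and drop out. The intra-cluster hops $J_{ij}\, a_i^\dagger a_j$ with $i,j \in C_k$ conserve the boson count within $C_k$ and hence also commute with $Q$. Only the inter-cluster hopping $H_{\mathrm{inter}} := \sum_{k \ne l}\sum_{i \in C_k, j \in C_l} J_{ij}\, a_i^\dagger a_j$ can raise a cluster past the truncation threshold. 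Thus $(H-QHQ)\ket{\eta} = (1-Q)H_{\mathrm{inter}}\ket{\eta}$, and since $\|1-Q\| \leq 1$, the max over $\ket{\eta} \in \mathrm{im}\,Q$ is bounded by the operator norm of $H_{\mathrm{inter}}$ restricted to $\mathrm{im}\,Q$, establishing the first inequality.

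For the second inequality, the goal is to bound $\|H_{\mathrm{inter}}\|_{\mathrm{im}\,Q}$ by a product of a bosonic enhancement and a spatial sum. The enhancement follows because for any $\ket{\eta} \in \mathrm{im}\,Q$ written in the occupation basis one has $\sqrt{n_j(n_i+1)} \leq \sqrt{(b+1)(b+2)} = O(b)$, since each cluster holds at most $b+1$ bosons. The spatial part is the power-law sum $\sum_{i \in C_k,\, j \in C_l} |J_{ij}|$ with $|J_{ij}|\leq d(i,j)^{-\alpha}$. Using the cluster-width hypothesis that initially occupied sites in $C_l$ lie at distance at least $L$ from every site outside $C_l$, this sum is dominated by the tail $\int_L^{\infty} r^{D-1-\alpha}\,dr = O(L^{D-\alpha})$, valid for $\alpha > D$. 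Combining the two factors gives the claimed bound $O(bL^{D-\alpha})$.

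The main obstacle is that $\mathrm{im}\,Q$ a priori permits bosons anywhere inside their clusters, including near cluster boundaries where adjacent-cluster distances can be $O(1)$, so the naive sum $\sum_{i \in C_k, j \in C_l} 1/d(i,j)^\alpha$ does \emph{not} by itself scale as $L^{D-\alpha}$. The $L$-dependence must instead be extracted by using the structure of the truncated dynamics: the states $\ket{\epsilon(\tau)}$ for which this operator norm is ultimately applied inherit the localization of the initial state near the $L$-separated cores of each cluster, and the non-vanishing matrix elements of $(1-Q)H_{\mathrm{inter}}$ are precisely those that transport a boson out of a cluster core across the $L$-buffer. I would formalize this by decomposing $H_{\mathrm{inter}}$ cluster-pair by cluster-pair and applying a Schur-type bound with weights reflecting this core-to-complement geometry, which isolates the desired $L^{D-\alpha}$ tail. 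This geometric/combinatorial step is where I expect the proof to require the most care.
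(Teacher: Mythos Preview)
Your reduction for the first inequality matches the paper exactly: on-site terms and intra-cluster hops preserve cluster occupations, so only the inter-cluster hopping $H_{\mathrm{inter}}$ survives the projection $(1-Q)$.

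For the second inequality, however, the paper's route is quite different from what you propose and does \emph{not} appeal to any localization of the particular states $\ket{\epsilon(\tau)}$; it produces a bound uniform over all of $\mathrm{im}\,Q$. For each cluster pair $C_k,C_l$ the paper diagonalizes the quadratic operator $H^{\mathrm{hop}}_{kl}=\sum_{i\in C_k,\,j\in C_l}J_{ij}a_i^\dagger a_j+\mathrm{h.c.}$ in single-particle normal modes $b_i^\dagger$ and observes that on $\mathrm{im}\,Q$ at most $2b$ bosons populate those modes, giving $\|H^{\mathrm{hop}}_{kl}\ket{\eta}\|\leq 2b\,\max_i w_i$. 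The top frequency is then bounded by the Gershgorin circle theorem applied to the coupling matrix: $\max_i w_i\leq\max_{i\in C_k}\sum_{j\in C_l}|J_{ij}|\leq |C_l|\,L_{kl}^{-\alpha}\leq L^{D}L_{kl}^{-\alpha}$, with $L_{kl}$ the minimum inter-cluster site distance. To sum over all cluster pairs without incurring an extensive factor of $K$, the paper groups pairs by their separation vector: for each fixed separation the relevant $H^{\mathrm{hop}}_{kl}$'s act on disjoint pairs and are therefore block-diagonal, so the norm of their sum equals the maximum block norm rather than the sum. Only then does one sum over cluster separations $l$, producing the $L^{D-\alpha}$ tail.

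The two ingredients you are missing are thus (i) the Gershgorin bound on the single-particle coupling matrix, which replaces your total sum $\sum_{i,j}|J_{ij}|$ by the much smaller row-sum maximum $\max_i\sum_j|J_{ij}|$, and (ii) the block-diagonality argument for non-overlapping cluster pairings. Your proposed workaround via localization of $\ket{\epsilon(\tau)}$ near the cluster cores would, even if it could be made rigorous, establish only a state-dependent estimate rather than the uniform-over-$\mathrm{im}\,Q$ statement of the lemma, so it does not actually prove the lemma as written.
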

\begin{proof}
Notice that for each term $H_i$ in the Hamiltonian, the operator $H-QHQ$ contains $H_i - Q H_i Q$, where the rightmost $Q$ can be neglected since $Q \ket{\eta} = \ket{\eta}$.
The on-site terms $\sum_i J_{ii}a^\dag_i a_i + V n_i (n_i-1)/2 $ do not change the boson number. Therefore, they cannot take $\ket{\eta}$ outside the image of $Q$, and do not contribute to $(H - QHQ)\ket{\eta}$.
The only contribution comes from hopping terms that change boson number, which we bound by 
\begin{align}
\norm{\sum_{i \in C_k,j \in C_l} J_{ij}a_i^\dag a_j},
\end{align}
where the sum is over sites $i$ and $j$ in distinct clusters $C_k$ and $C_l$, respectively.
This is because only hopping terms that connect different clusters can bring $\ket{\eta}$ outside the image of $Q$, since hopping terms within a single cluster maintain the number of bosons per cluster.

For illustration, let us focus on terms that couple two clusters $C_1$ and $C_2$.
The distance between these two clusters is denoted $L_{12}$. For any coupling $J_{ij}$ with $i \in C_1$ and $j \in C_2$, we can bound $\abs{J_{ij}} \leq L_{12}^{-\alpha}$ by assumption.
Let 
\begin{align}
H^{\mathrm{hop}}_{12} = \sum_{i \in C_1 \ j \in C_2} J_{ij} a_i^\dag a_j + \mathrm{h.c.}                                                                                        \end{align}
denote the sum over all such pairs of sites.
Then, we can bound $\norm{H^{\mathrm{hop}}_{12}\ket{\eta}} \leq O(b)$.
To see this, diagonalize $H^{\mathrm{hop}}_{12} = \sum_{i} w_i b_i^\dag b_i$. Since $H^{\mathrm{hop}}_{12}$ only acts on two clusters, each normal mode contains up to $2b$ bosons.
The maximum eigenvalue of $H^\mathrm{hop}_{12}$ is bounded by $2b\max_i w_i$, where $w_i$ is the maximum normal mode frequency, given by the eigenvalue of the matrix $J_{ij}: i\in C_1, j \in C_2$.
We now apply the Gershgorin circle theorem, which states that the maximum eigenvalue of ${J}$ is bounded by the quantity $\max_i (\sum_j \abs{J_{ij}})\leq L^D L_{12}^{-\alpha} $.

Taking advantage of the fact that the clusters form a cubic lattice in $D$ dimensions, we can group pairs of clusters by their relative distances.
If we label clusters $i$ by their $D$-dimensional coordinate $i_1,i_2,...,i_D$, then we can define the cluster distance $l$ between $i$ and $j$ as $l+1= \max_d \abs{i_d - j_d}$.
Cluster distance $l$ corresponds to a minimum separation $l \times L$  between sites in different clusters.
With this definition, there are $((2l+3)^D - (2l+1)^D) \approx 2^D D l^{D-1}$ clusters at a cluster distance $l$ from any given cluster (\cref{fig_clusterpairing}(a)), and $K \times 2^D D l^{D-1}$ total pairs of clusters at cluster distance $l$.
Notice that for a given separation vector, $K/2$ pairs of clusters ($K$ total) can be simultaneously coupled without overlap (\cref{fig_clusterpairing}(b)).
Therefore, there are approximately $2^{D+1} D l^{D-1}$ non-overlapping groupings per distance $l$.
The sum over these non-overlapping Hamiltonians $H^{\mathrm{hop}}_{a_1 b_1}+...+H^{\mathrm{hop}}_{a_{K/2} b_{K/2}}$ for each grouping is block diagonal. Therefore, the spectral norm (maximum eigenvalue) of the total Hamiltonian is equal to the maximum of the spectral norm over all irreducible blocks.
Putting all this together, as long as $D-1-\alpha < -1$ the bound becomes 
\begin{align}
\max_{\ket{\eta} \in \mathrm{im\ } Q} \norm{(H - QHQ)\ket{\eta}} \leq \sum_{l=0}^{l_{\mathrm{max}}}  O(2^{D+1} D l^{D-1}) (2 b) L^D (l L)^{-\alpha} = O(b L^{D-\alpha}).
\end{align}
\end{proof}

\begin{figure}[]
\includegraphics[width=0.8\linewidth]{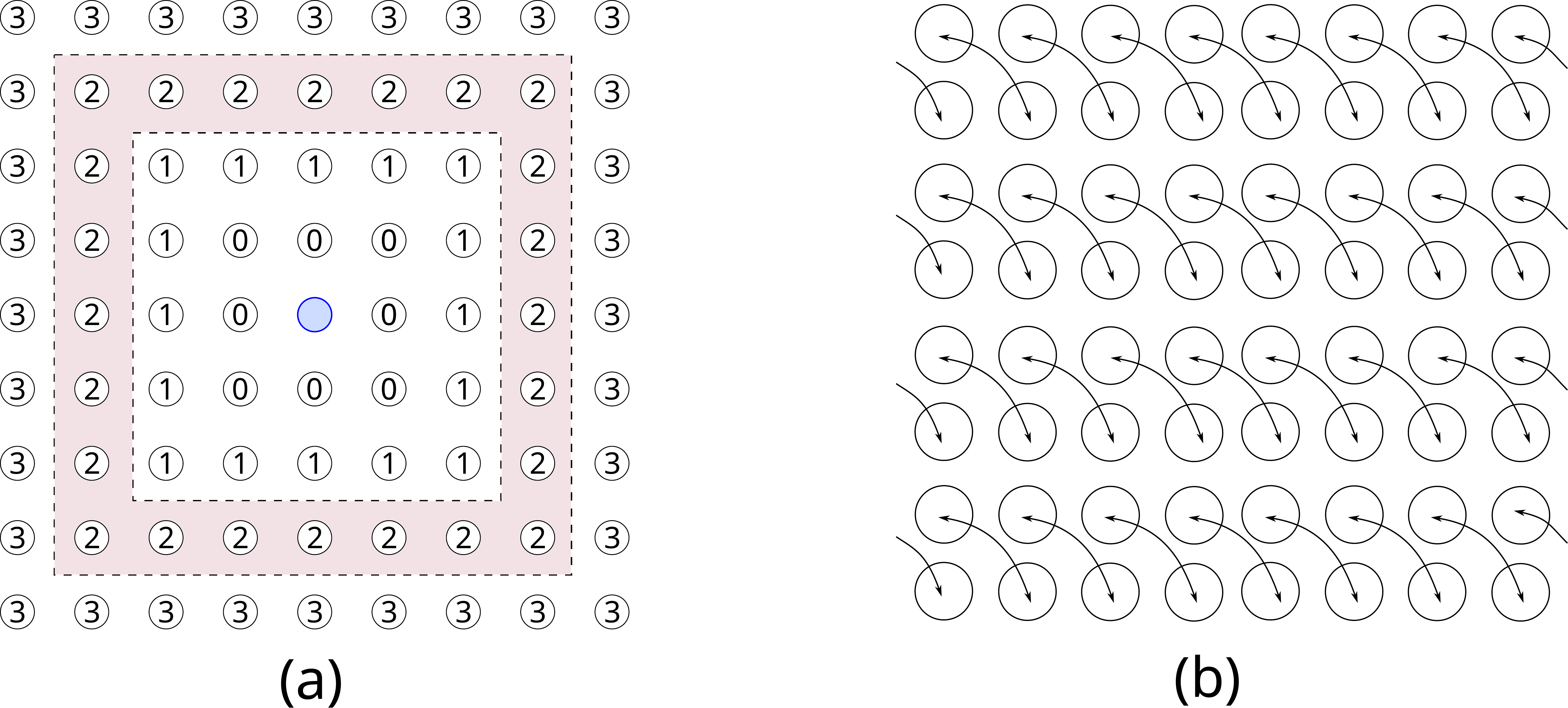}
\caption{(a) Cluster distances between the blue cluster in the center and nearby clusters.
The total number of clusters at cluster distance $l=2$ (pink background) is given by $(2l+3)^2 - (2l+1)^2  = 24$. (b) Non-overlapping pairing between clusters separated by a diagonal.
The distance between these clusters is $l=0$, since they share a boundary and contain adjacent sites.} \label{fig_clusterpairing}
\end{figure}

We are now in a position to prove \cref{thm_easy}.
\begin{proof}[Proof of \cref{thm_easy}]
There are two error contributions, $\epsilon$ and $\delta$, to the total error.
The HHKL error $\epsilon$ is given by evaluation of \cref{eq_toterr}, which is minimized by either choosing $N=1$ or $N = t/t_1$ with $t_1$ a small fixed constant. This leads to three regimes with errors
\begin{align}
    \epsilon \leq O(1) \times 
    \begin{cases}
    ne^{vt-L} ,\ \ & \alpha \rightarrow \infty \\
    \frac{n t^{\alpha-D}}{L^{\alpha-2D}} ,\ \ & 2D +\frac{D}{\beta -1} < \alpha < \infty \\
    \frac{n (e^{vt}-1)}{L^{\alpha-D-1}},\ \ & D+1 < \alpha \leq 2D + \frac{D}{\beta - 1}.
    \end{cases}
    \label{eq_epsilonerrs}
\end{align}

The truncation error, arising from using $H'$ rather than $H$ in the first step, is given by
\begin{align}
{\delta(t)} &\leq O(b L^{D-\alpha}) \int_0^t d\tau \epsilon(\tau).
\end{align}
Therefore, we can upper bound $\delta(t)$ by $\epsilon$ times an additional factor.
This factor is $bL^{D-\alpha} t$, $\delta(t) = b L^{D-\alpha} t \epsilon$, when $\epsilon(\tau) = \poly(\tau)$ ($\alpha < \infty$), and it is $L^{D-\alpha}$, $\delta(t) = L^{D-\alpha} \epsilon$ when $\epsilon(\tau) = \exp(v\tau)$ ($\alpha \rightarrow \infty$).
Our easiness results only hold for $\alpha > D+1$, so the $L$-dependent factor serves to suppress the truncation error in the asymptotic limit.
Although the additional factor of $t$ could cause $\delta(t) > \epsilon$ at late times, by this time, $\epsilon > \Omega(1)$ and we are no longer in the easy regime.
Therefore, the errors presented in \cref{eq_epsilonerrs} can be immediately applied to calculate the timescales from the main text.
\end{proof}
The resulting timescales are summarized in \cref{tab:easiness timescales}, which highlights the scaling of the timescale with respect to different physical parameters.
We also consider the scaling of the easiness timescales when the density of the bosons increases by a factor $\rho$.
In our setting, we implement this by scaling the number of bosons by $\rho$ while keeping the number of lattice sites and the number of clusters (and their size) fixed.
The effect of this is to increase the Lieb-Robinson velocity: $v \rightarrow v \rho$.
For all three cases, the net effect of increasing the density  by a factor $\rho$ is to decrease the easiness timescale.

\begin{center}
\begin{table}[]
\centering
\begin{tabular}{@{}llll@{}} \toprule
Regime & Error & $t_\mathrm{easy}(n,L)$ & $t_\mathrm{easy}(\rho)/t_\mathrm{easy}(\rho=1)$ \\ \midrule
$\alpha \rightarrow \infty$ & $n e^{vt-L}$ & $L$ & $1/\rho$ \\
$2D + \frac{D}{\beta - 1}< \alpha < \infty$ & $\frac{n vt^{\alpha - D}}{L^{\alpha-2D}}$ & $n^{\frac{-1}{\alpha-D}}L^{\frac{\alpha-2D}{\alpha-D}}$ & $\rho^{\frac{-2}{\alpha -D}}$ \\
$D+1 < \alpha  \leq 2D + \frac{D}{\beta - 1}$\qquad & $\frac{n (e^{vt}-1)}{L^{\alpha-D-1}}$ \qquad & $({\alpha-D-1})\log L - \log n$ \qquad& $1/\rho$ \\ \bottomrule
\end{tabular}
\caption{Summary of easiness timescales in the different regimes. Timescales follow from the error and are presented first as a function of $n$ and $L$, which are the relevant physical scales of the problem. We study the effect of the density by performing the scaling $n\to \rho n, L \to L, b \to \rho b, m \to m$. The last column shows the timescale as a function of $\rho$ in terms of the timescale when $\rho=1$, namely $t_\mathrm{easy}(\rho=1)$.}
\label{tab:easiness timescales}
\end{table} 
\end{center}

\section{Section III: \qquad Hardness timescale for interacting bosons}
In this section we provide more details about how to achieve the timescales in \cref{thm_hard}.
In the interacting case, almost any interaction is universal for \BQP\ \cite{Childs2013} and hence these results are applicable to general on-site interactions $f(n_i)$.

We first describe how a bosonic system with fully controllable local fields $J_{ii}(t)$, hoppings $J_{ij}(t)$, and a fixed Hubbard interaction $\frac{V}{2} \sum_i \hat{n}_i (\hat{n}_i - 1)$ can implement a universal quantum gate set. 
To simulate quantum circuits, which act on two-state spins, we use a dual-rail encoding.
Using $2n$ bosonic modes, and $n$ bosons, $n$ logical qubits are defined by partitioning the lattice into pairs of adjacent modes, and a boson is placed in each pair.
Each logical qubit spans a subspace of the two-mode Hilbert space.
Specifically, $\ket{0}_L = \ket{10}, \ket{1}_L = \ket{01}$.
We can implement any single qubit (2-mode) unitary by turning on a hopping between the two sites ($X$-rotations) or by applying a local on-site field ($Z$-rotations).
To complete a universal gate set, we need a two-qubit entangling gate.
This can be done, say, by applying a hopping term between two sites that belong to different logical qubits \cite{Underwood2012}.
All these gates are achievable in $O(1)$ time when $V = \Theta(1)$.
In the limit of large Hubbard interaction $V \to \infty$, the entangling power of the gate decreases as $1/V$ \cite{Underwood2012} and one needs $O(V)$ repetitions of the gate in order to implement a standard entangling gate such as the CNOT.

For hardness proofs that employ postselection gadgets, we must ensure that the gate set we work with comes equipped with a Solovay-Kitaev theorem.
This is the case if the gate set is closed under inverse, or contains an irreducible representation of a non-Abelian group \cite{Bouland2018b}.
In our case, the gate set contains single-qubit Paulis and hence has a Solovay-Kitaev theorem, which is important for the postselection gadgets to work as intended.

\begin{figure}[]
 \includegraphics[width=0.5\linewidth]{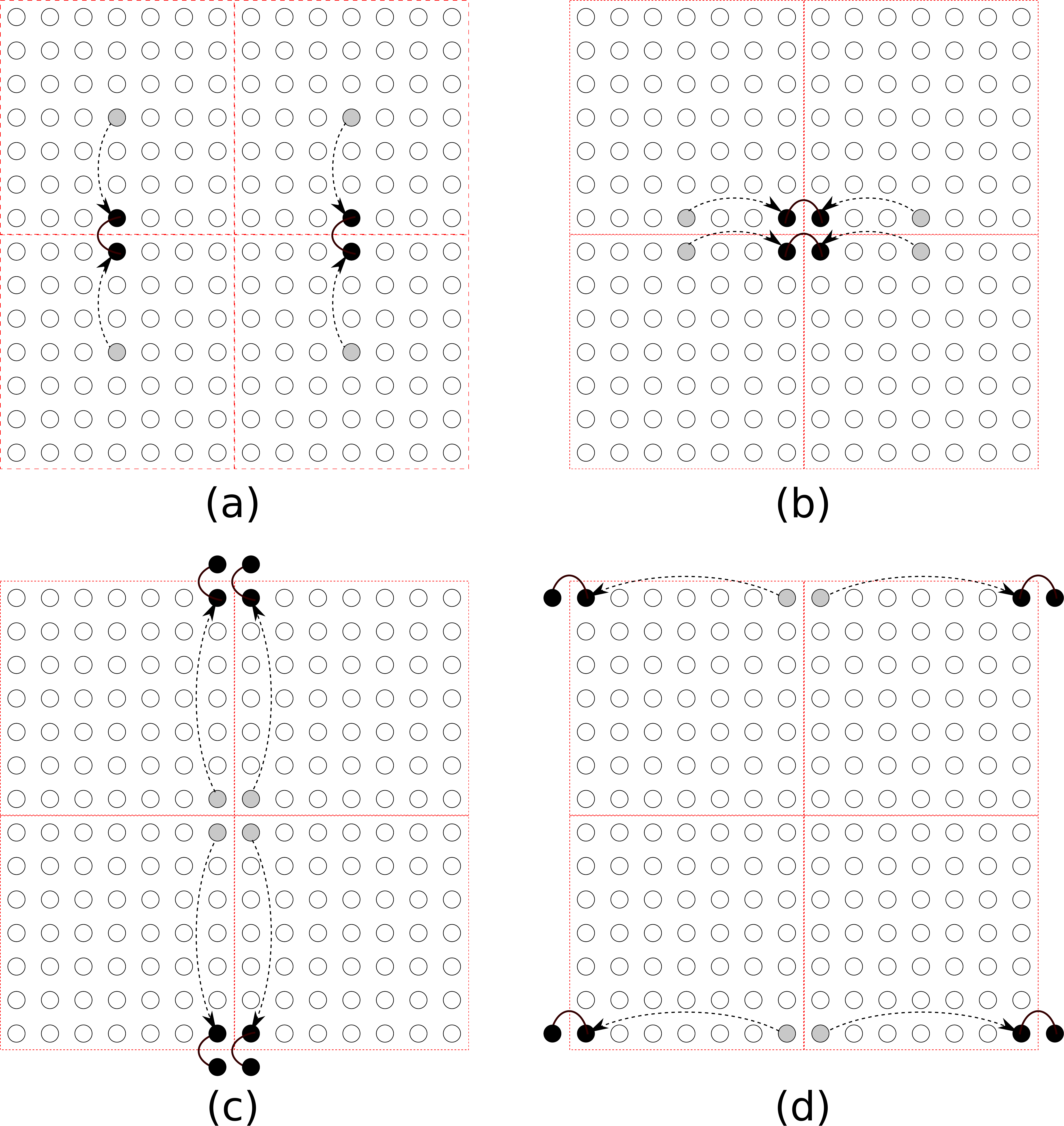}
 \caption{A protocol that implements the logical circuit of Ref.~\cite{Bermejo-Vega2018}.
 Each subfigure shows the location of the site that previously encoded the $\ket{1}$ state in gray.
 The current site that encodes the $\ket{1}$ state is in black.
 The site that encodes $\ket{0}$ is not shown but moves similarly as the $\ket{1}$ state.
 The distance traversed by each qubit is $L + L + 2L + 2L = 6L$.} \label{fig_scheme}
\end{figure}

We will specifically deal with the scheme proposed in Ref.~\cite{Bermejo-Vega2018}.
It applies a constant-depth circuit on a grid of $\sqrt{n}\times \sqrt{n}$ qubits in order to implement a random IQP circuit \cite{Bremner2011,Bremner2016} on $\sqrt{n}$ effective qubits.
This comes about because the cluster state, which is a universal resource for measurement-based quantum computation, can be made with constant depth on a two-dimensional grid.

For short-range hops ($\alpha \rightarrow \infty$), we implement the scheme in four steps as shown in \cref{fig_scheme}.
In each step, we move the logical qubits to bring them near each other and make them interact in order to effect an entangling gate.
For short-range hopping, the time taken to move a boson to a far-off site distance $L$ away dominates the time taken for an entangling gate.
The total time for an entangling gate is thus $O(L) + O(1) = O(L)$.

For long-range hopping, we use the same scheme as in \cref{fig_scheme}, but we use the long-range hopping to speed up the movement of the logical qubits.
This is precisely the question of state transfer using long-range interactions/hops \cite{Eldredge2017,Guo2019a,Tran2020}.
In the following we give an overview of the best known protocol for state transfer, but first we should clarify the assumptions in the model.
The Hamiltonian is a sum of $O(m^2)$ terms, each of which has norm bounded by at most $1/d(i,j)^\alpha$.
Since we assume we can apply any Hamiltonian subject to these constraints, in particular, we may choose to apply hopping terms across all possible edges.
This model makes it possible to go faster than the circuit model if we compare the time in the Hamiltonian model with depth in the circuit model.
This power comes about because of the possibility of allowing simultaneous noncommuting terms to be applied in the Hamiltonian model.

The state transfer protocols in Ref.~\cite{Guo2019a,Tran2020} show such a speedup for state transfer.
The broad idea in both protocols is to apply a map $\ket{1}_1 \rightarrow \ket{1}_A := \sum_{j \in A} \frac{1}{\sqrt{\abs{A}}} \ket{1}_j$, followed by the steps $\ket{1}_A \rightarrow \ket{1}_B$ and $\ket{1}_B \rightarrow \ket{1}_2$, where $A$ and $B$ are regions of the lattice to be specified.
In the protocol of Ref.~\cite{Guo2019a}, which is faster than that of Ref.~\cite{Tran2020} for $\alpha \leq D/2$, $A = B = \{j: j \neq 1,2\}$ and each step takes time $O(L^\alpha/\sqrt{N-2})$, where $N-2$ is the number of ancillas used and $L$ is the distance between the two furthest sites.
In the protocol of Ref.~\cite{Tran2020}, which is faster for $\alpha \in (D/2,D+1]$, $A$ and $B$ are large regions around the initial and final sites, respectively.
This protocol takes time $O(1)$ when $\alpha < D$, $O(\log L)$ when $\alpha = D$, and $O(L^{\alpha - D})$ when $\alpha > D$.

In our setting, we use the state transfer protocols to move the logical qubit faster than time $O(L)$ in each step of the scheme depicted in \cref{fig_scheme}.
If $\alpha <D/2$, we use all the ancillas in the entire system, giving a state transfer time of $O(m^{\alpha/D -1/2}) = O(n^{\beta (\frac{\alpha}{D} - \frac{1}{2})})$.
If $\alpha > D/2$, we only use the empty sites in a cluster as ancillas in the protocol of Ref.\,\cite{Tran2020}, giving the state transfer time mentioned above.
This time is faster than $O(L)$, the time it would take for the nearest-neighbor case, when $\alpha < D + 1$.
Therefore, for 2D or higher and $\alpha \geq D/2$, the total time it takes to implement a hard-to-simulate circuit is $\min[L, L^{\alpha -D}\log L]+O(1)$, proving \cref{thm_hard} for interacting bosons.
When $\alpha < D/2$, the limiting step is dominated by the entangling gate, which takes time $O(1)$.
Therefore for this case we only get fast hardness through boson sampling, which is discussed in Section IV.
Note that when $t=o(1)$ and interaction strength is $V=\Theta(1)$, the effect of the interaction is governed by $Vt = o(1)$, which justifies treating the problem for short times as a free-boson problem.

\subsection{III.A: \qquad One dimension}
In 1D with nearest-neighbor hopping, we cannot hope to get a hardness result for simulating constant depth circuits, which is related to the fact that one cannot have universal measurement-based quantum computing in one dimension.
We change our strategy here.
The overall goal in 1D is to still be able to simulate the scheme in Ref.~\cite{Bermejo-Vega2018} since it provides a faster hardness time (at the cost of an overhead in the qubits).
The way this is done is to either (i) implement $O(n)$ SWAPs in 1D in order to implement an IQP circuit \cite{Bremner2011}, or (ii) use the long-range hops to directly implement gates between logical qubits at a distance $L$ away.

For the first method, we use state transfer to implement a SWAP by moving each boson within a cluster a distance $\Theta(L)$.
This takes time $O(t_s(L))$, where $t_s(L)$ is the time taken for state transfer over a distance $L$ and is given by
\begin{align}
t_s(L) = c \times
\begin{cases}
 L, & \alpha > 2 
 \\ L^{\alpha-1}, & \alpha \in (1,2]
 \\ \log L, & \alpha = 1
 \\ 1, & \alpha \in [\frac{1}{2},1)
 \\ L^{\alpha -1/2}, & \alpha < \frac{1}{2}.
\end{cases}
\end{align}
We write this succinctly as $O\left(\min[L,L^{\alpha - 1}\log L +1, L^{\alpha -1/2}]\right)$.
The total time for $n$ SWAPs is therefore $O\left(n\times \min[L,L^{\alpha - 1}\log L +1, L^{\alpha -1/2}]\right)$.
\begin{figure} \centering
 \includegraphics[width=0.45\linewidth]{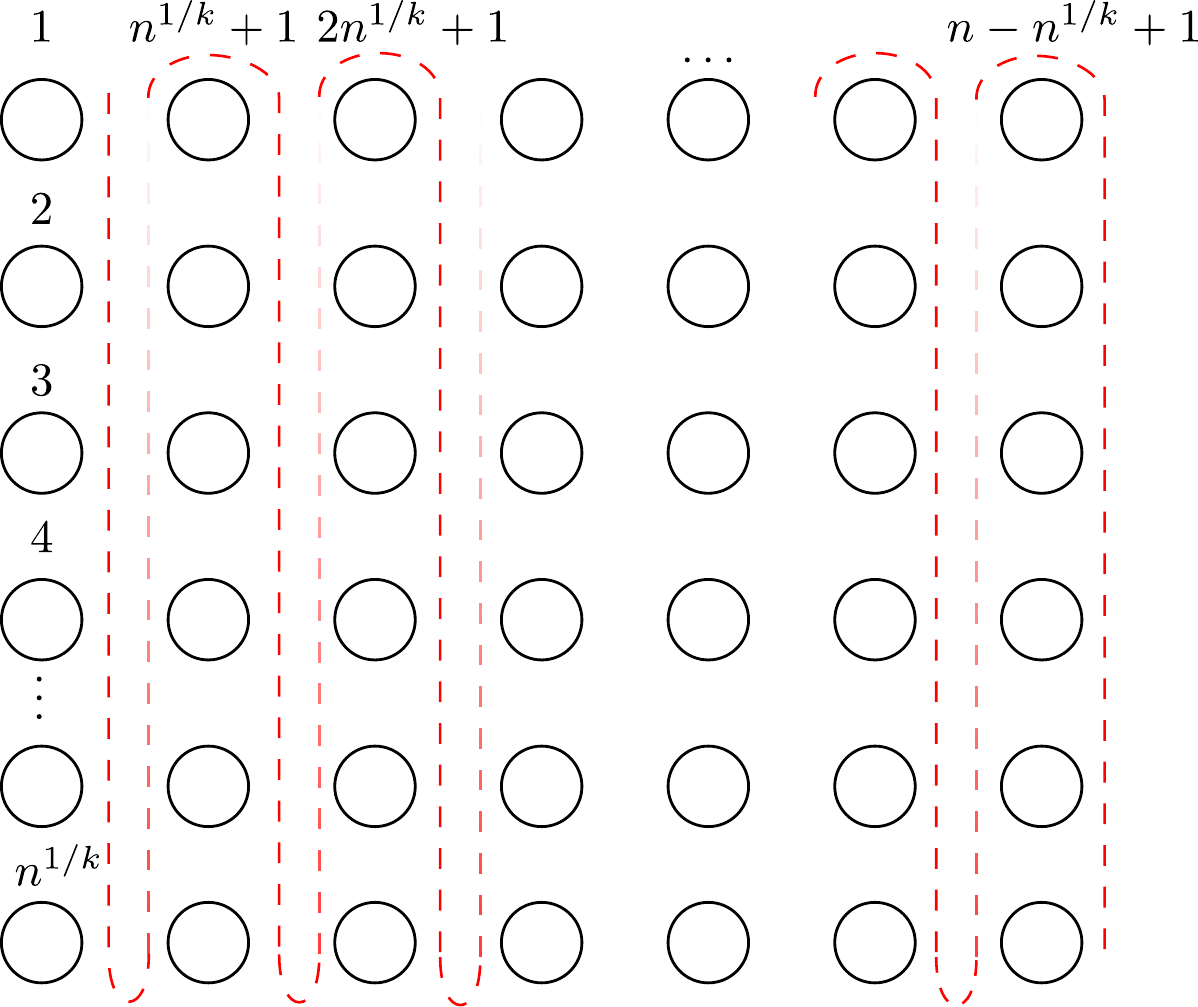}
 \caption{A snaking scheme to assign indices to qubits in 2D for a $n^{1/k} \times n^{1-1/k}$ grid, which is used in mapping to 1D.}\label{fig_snake}
\end{figure}

The second method relies on the observation that when $\alpha \rightarrow 0$, the distinction between 1D and 2D becomes less clear, since at $\alpha=0$, the connectivity is described by a complete graph and all hopping strengths are equal.
Let us give some intuition for the $\alpha \rightarrow 0$ case.
One would directly ``sculpt'' a 2D grid from the available graph, which is a complete graph on $n$ vertices (one for every logical qubit) with weights $w_{ij}$ given by $d(i,j)^{-\alpha}$.
If we want to arrange qubits on a 1D path, we can assign an indexing to qubits in the 2D grid and place them in the 1D path in increasing order of their index.
One may, in particular, choose a ``snake-like indexing'' depicted in \cref{fig_snake}.
This ensures that nearest-neighbor gates along one axis of the 2D grid map to nearest-neighbor gates in 1D.
Gates along the other axis, however, correspond to nonlocal gates in 1D.
Suppose that the equivalent grid in 2D is of size $n^{1/k} \times n^{1-1/k}$.
The distance between two qubits that have to participate in a gate is now marginally larger ($O(Ln^{1/k})$ instead of $O(L)$), but the depth is greatly reduced: it is now $O(n^{1/k})$ instead of $O(n)$.
We again use state transfer to move close to a far-off qubit and then perform a nearest-neighbor entangling gate.
This time is set by the state transfer protocol, and is now $t_s(n^{1/k}L) = O\left(n^{1/k}\times \min[L,L^{\alpha - 1}\log L+1, L^{\alpha -1/2}]\right)$.
For large $k=\Theta(1)$, this gives us the bound $O\left(\min[L^{1+\delta}, L^{\alpha - 1 + \delta}+L^{\Theta(\delta)},L^{\alpha -1/2 + \delta}]\right)$ for any $\delta > 0$, giving a coarse transition.
Notice, however, that faster hardness in 1D comes at a high cost-- the effective number of qubits on which we implement a hard circuit is only $\Theta(n^{1/k}) = n^{\Theta(\delta)}$, which approaches a constant as $\delta \rightarrow 0$.

This example of 1D is very instructive-- it exhibits one particular way in which the complexity phase transition can happen.
As we take higher and higher values of $k$, the hardness time would decrease, coming at the cost of a decreased number of effective qubits.
This smoothly morphs into the easiness regime when $\alpha \rightarrow \infty$ since in this regime both transitions happen at $t = \Theta(L)$.

If the definition of hardness is more stringent (in order to link it to fine-grained complexity measures such as explicit quantitative lower bound conjectures), then the above mentioned overhead is undesirable.
In this case we would adopt the first strategy to implement SWAPs and directly implement a random IQP circuit on all the $n$ qubits.
This would increase the hardness time by a factor $n$.

\subsection{III.B: \qquad Hardcore limit}
\begin{figure}
\centering
\includegraphics[width=0.6\linewidth]{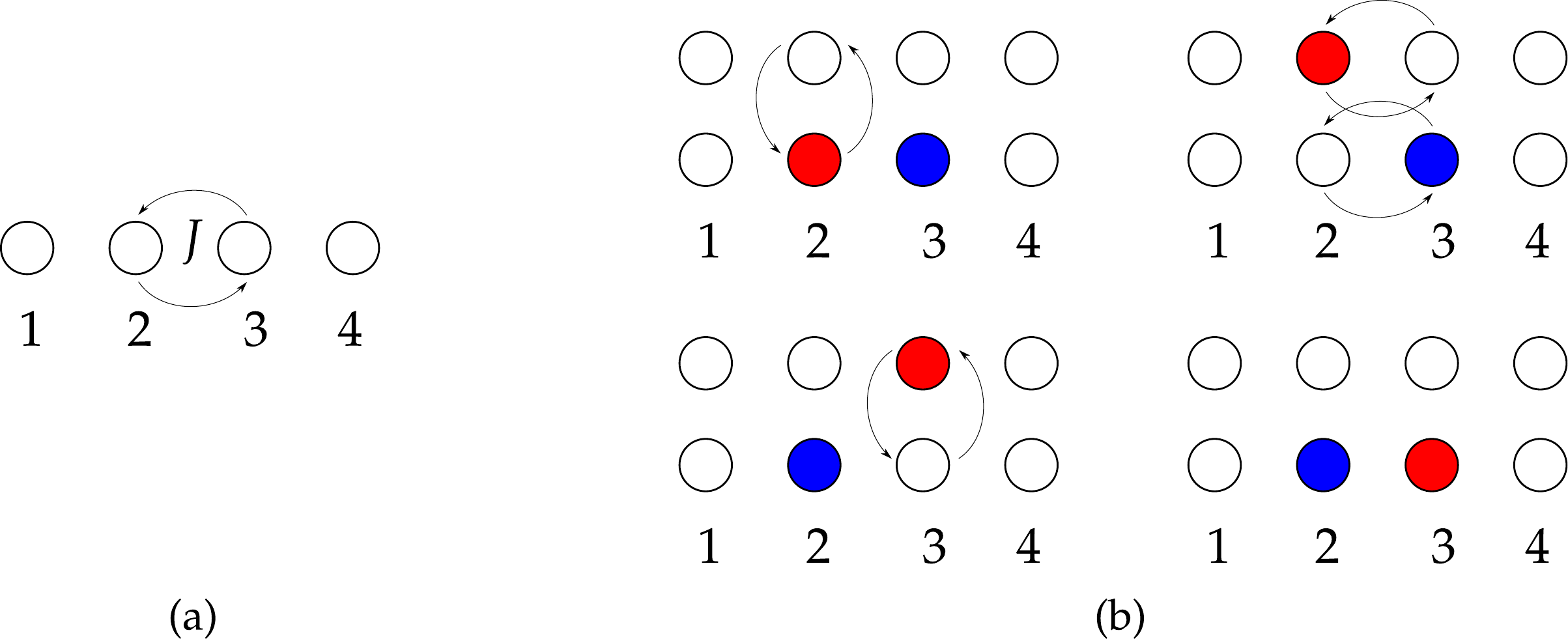}
\caption{(a) A hopping between sites 2 and 3 that implements the mode unitary 
$\begin{pmatrix}
\cos(\abs{J}t) & -i\sin(\abs{J}t){J/\abs{J}} \\
-i\sin(Jt){J^*/\abs{J}} & \cos(\abs{J}t) 
\end{pmatrix} = e^{-it (\Re{J} X + \Im{J}Y)}$.
When $\abs{J}t  = \pi$, this is a SWAP between two modes with phases $(-i{J/\abs{J}},-i{J^*/\abs{J}})$ that depends on $\arg{J}$, the argument of $J$.
(b) A ``physical'' SWAP between sites 2 and 3 by using ancilla sites available whenever the system is not nearest-neighbor in 1D.
The colors are used to label the modes and how they move, and do not mean that both sites are occupied.
The total hopping phase incurred when performing the physical SWAP can be set to be $(+i,-i)$, which cannot be achieved with just the hopping term shown in (a).} \label{fig_hcdualrail}
\end{figure}

In the hardcore limit $V \rightarrow \infty$, the strategy is modified.
Let us consider a physical qubit to represent the presence ($\ket{1}$) or absence ($\ket{0}$) of a boson at a site.
A nearest-neighbor hop translates to a term in the Hamiltonian that can be written in terms of the Pauli operators as $XX + YY$.
Further, an on-site field $J_{ii} a_i^\dag a_i$ translates to a term $\propto Z$.
There are no other terms available, in particular single-qubit rotations about other axes $X$ or $Y$.
This is because the total boson number is conserved, which in the spin basis corresponds to the conservation of $\sum_i Z_i$.
This operator indeed commutes with both the allowed Hamiltonian terms specified above.

Let us now discuss the computational power of this model.
When the physical qubits are constrained to have nearest-neighbor interactions in 1D, this model is nonuniversal and classically simulable.
This can be interpreted due to the fact that this model is equivalent to matchgates on a path (i.e.\ a 1D nearest-neighbor graph), which is nonuniversal for quantum computing without access to a SWAP gate.
Alternatively, one can apply the Jordan-Wigner transformation to map the spin model onto free fermions.
One may then use the fact that fermion sampling is simulable on a classical computer \cite{Terhal2002}.

When the connectivity of the qubit interactions is different, the model is computationally universal for \BQP.
In the matchgate picture, this result follows from Ref.~\cite{Brod2013}, which shows that matchgates on any graph apart from the path or the cycle are universal for \BQP \ in an encoded sense.
In the fermion picture, the Jordan-Wigner transformation on any graph other than a path graph would typically result in nonlocal interacting terms that are not quadratic in general.
Thus, the model cannot be mapped to free, quadratic fermions and the simulability proof from Ref.~\cite{Terhal2002} breaks down.

Alternatively, a constructive way of seeing how we can recover universality is as follows.
Consider again the dual rail encoding and two logical qubits placed next to each other as in \cref{fig_hcdualrail}.
Apply a coupling $J(a_2^\dag a_3 + a_3^\dag a_2)$ on the modes 2 and 3 for time $t=\frac{\pi}{2J}$.
This effects the transition $\ket{10}_{23} \rightarrow -i \ket{01}_{23}$ and $\ket{01}_{23} \rightarrow -i \ket{10}_{23}$, while leaving the state $\ket{11}_{23}$ the same.
Now we swap the modes 2 and 3 using an ancilla mode that is available by virtue of having either long-range hopping or having $D>1$.
This returns the system back to the logical subspace of exactly one boson in modes 1 \& 2, and one boson in modes 3 \& 4, and effects the unitary $\mathrm{diag}\{1,1,1,-1\}$ in the (logical) computational basis.
This is an entangling gate that can be implemented in $O(1)$ time and thus the hardness timescale for hardcore interactions is the same as that of Hubbard interactions with $V=\Theta(1)$.

We finally discuss the case when $V$ is polynomially large.
Using the dual-rail encoding and implementing the same protocol as the non-hardcore case now takes the state $\ket{11}_{23}$ to $\lambda \ket{11}_{23} + \mu \frac{\ket{20}_{23}+\ket{02}_{23}}{\sqrt{2}}$, with $\mu \propto\frac{J}{\sqrt{8J^2 + V^2}} \sin \left(\frac{t \sqrt{8 J^2+V^2}}{2}\right)$.
When $\abs{\mu} \neq 0$, we get an error because the state is outside the logical subspace.
The probability with which this action happens is suppressed by $1/V^2$, however, which is polynomially small when $V=\poly(n)$.

However, one can do better: by carefully tuning the hopping strength $J \in [0,1]$ and the evolution time $t$, one can always achieve the goal of getting $\mu = 0$ exactly and implementing an operation $\exp[-i \frac{\pi}{2}X]$ in the $\ket{10}_{23},\ket{01}_{23}$ subspace.
This requires setting $t\sqrt{2J^2 + \frac{V^2}{4}} = m\pi$ and $t = \frac{2\pi}{J}$ for integer $m$.
This can be solved as follows: set $m = \lceil \sqrt{8+V^2} \rceil$, and $J = \frac{V}{\sqrt{m^2 -8}}$ (which is $\leq 1$ since $ m \geq \sqrt{8+V^2}$).
The time is set by the condition $t  = \frac{2\pi}{J}$, which is $\Theta(1)$.
This effects a logical CPHASE$[\phi]$ gate with angle $\phi = -\pi V/J$.

Finally, the above parameters that set $\mu$ exactly to zero work even for exponentially large $V = \Omega(\exp(n))$, but this requires exponentially precise control of the parameters $J$ and $t$, which may not be physically feasible.
In this case, we simply observe that $\abs{\mu}^2$, the probability of going outside the logical subspace and hence making an error, is $O(1/V^2)$, which is exponentially small in $n$.
Therefore, in this limit, the gate we implement is exponentially close to perfect, and the complete circuit has a very small infidelity as well.

\section{Section IV: \qquad Hardness timescale for free bosons} \label{sec_freebosonhard}
In this section, we review Aaronson and Arkhipov's method of creating a linear optical state that is hard to sample from \cite{Aaronson2011}.
We then give a way to construct such states in time $\tilde{O}\left(n{{m^{\alpha/D - 1/2}}}\right)$ with high probability in the Hamiltonian model, and prove \cref{thm_hard} for free bosons.

For free bosons, in order to get a state that is hard to sample from, we need to apply a Haar-random linear-optical unitary on $m$ modes to the state $\ket{1,1,\ldots 1,0,0,\ldots0}$.
Aaronson and Arkhipov gave a method of preparing the resulting state in $O(n\log m)$ depth in the circuit model.
Their method involves the use of ancillas and can be thought of as implementing each column of the Haar-random unitary separately in $O(\log m)$-depth.
Here we mean that we apply the map $\ket{1}_j \rightarrow \sum_{i \in \Lambda} U_{ij} \ket{1}_i $ to ``implement" the column $i$ of the linear-optical unitary $U$.
In the Hamiltonian model, we can apply simultaneous, non-commuting terms of a Hamiltonian involving a common site.
The only constraint is that each term of the Hamiltonian should have a bounded norm of $1/d(i,j)^\alpha$.
In this model, when $\alpha$ is small, it is possible to implement each unitary in a time much smaller than $O(\log m)$-- indeed, we show the following:
\begin{lemma} \label{lem_fastcol}
 Let $U$ be a Haar-random unitary on $m$ modes.
 Then with probability $1- \frac{1}{\poly(m)}$ over the Haar measure, each of the first $n$ columns of $U$ can be implemented in time $O\left({\frac{\sqrt{\log m}}{m^{1/2 - \alpha/D}}}\right)$.
\end{lemma}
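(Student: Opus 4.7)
\emph{Proof proposal for \cref{lem_fastcol}.}
The plan is to implement each column $j_0 \in \{1,\dots,n\}$ directly, using a time-independent single-particle ``star'' Hamiltonian whose couplings are proportional to the desired column entries, and to bound the required renormalization using Haar concentration. Concretely, for each target column $j_0$, define
\begin{align}
H_{j_0} \;=\; \lambda_{j_0}\!\!\sum_{i\in\Lambda}\!\!\left(U_{i j_0}\, a_i^\dagger a_{j_0} + U_{i j_0}^{*}\, a_{j_0}^\dagger a_i\right),
\end{align}
where $\lambda_{j_0}>0$ is chosen as large as possible subject to the long-range constraint $\lambda_{j_0}|U_{i j_0}|\le d(i,j_0)^{-\alpha}$ for every $i$. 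Writing $|v\rangle=\sum_i U_{i j_0}|i\rangle$ in the single-particle sector, $H_{j_0}$ acts as $\lambda_{j_0}\bigl(|v\rangle\langle j_0|+|j_0\rangle\langle v|\bigr)$; since $\|v\|=1$, its restriction to $\mathrm{span}\{|j_0\rangle,|v\rangle\}$ has spectral norm $\Theta(\lambda_{j_0})$. Evolving for a time $t_{j_0}=\Theta(1/\lambda_{j_0})$ implements the map $|1\rangle_{j_0}\mapsto |v\rangle$ (up to a global phase and an $O(|U_{j_0 j_0}|)$ correction that is negligible under the concentration bound below).

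The key quantitative step is to lower-bound $\lambda_{j_0}$. By definition,
\begin{align}
\lambda_{j_0} \;=\; \min_{i}\frac{d(i,j_0)^{-\alpha}}{|U_{i j_0}|} \;\ge\; \frac{1}{\displaystyle\max_{i}|U_{i j_0}|\cdot\max_{i}d(i,j_0)^{\alpha}}.
\end{align}
The diameter of the lattice is $O(m^{1/D})$, so $\max_i d(i,j_0)^{\alpha}=O(m^{\alpha/D})$. For the other factor we invoke the standard concentration result (see, e.g., Ref.\,\cite{Lakshminarayan2008}) that a column of a Haar-random $U(m)$ matrix is distributed as a uniformly random unit vector, and that the maximum squared modulus of such a vector satisfies $\Pr\!\bigl[\max_i|U_{i j_0}|^2 > C\log m/m\bigr]\le 1/\mathrm{poly}(m)$ for a suitable constant $C$. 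Combining, with probability at least $1-1/\mathrm{poly}(m)$ we get $\lambda_{j_0}=\Omega\!\left(1/(\sqrt{\log m/m}\cdot m^{\alpha/D})\right)$, and hence $t_{j_0}=O\!\left(\sqrt{\log m}\,m^{\alpha/D-1/2}\right)$, matching the claim. A union bound over the $n\le m$ columns degrades the failure probability by at most a factor of $n$, so it remains $1/\mathrm{poly}(m)$.

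I expect the main obstacle to be the concentration step rather than the Hamiltonian construction: one must control the maximum of $m$ correlated Haar-distributed moduli, and then show that the corresponding ``bad'' event for at least one of the $n$ columns is still polynomially small. This is handled cleanly by the standard upper tail bound for the maximum of a Haar-random unit vector's components and a union bound; any polynomial overhead in the failure probability is absorbed into the polynomial concentration bound by choosing $C$ large enough. A secondary subtlety is that $H_{j_0}$ does not simply map $|j_0\rangle$ to $|v\rangle$ on the nose, because $|v\rangle$ has a small $|j_0\rangle$-component of size $|U_{j_0 j_0}|=O(\sqrt{\log m/m})$; however this defect is vanishingly small in $m$ and can be corrected by a trailing on-site phase of negligible time cost, or simply absorbed into the total-variation error budget of the sampling reduction. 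With these pieces in place, \cref{lem_fastcol} follows.
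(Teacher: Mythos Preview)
Your proposal is correct and rests on the same two pillars as the paper's argument: (i) a ``star'' single-shot Hamiltonian whose effective Rabi frequency is set by the inverse of the largest entry of the target column, together with the order-statistics bound $\max_i |U_{ij}|^2 \le O(\log m/m)$ for a Haar column (Ref.~\cite{Lakshminarayan2008}), and (ii) the crude diameter bound $\max_i d(i,j_0)^\alpha = O(m^{\alpha/D})$ to pass from $\alpha=0$ to general $\alpha$, followed by a union bound over columns.

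The only structural difference is that you center the star at the initially occupied site $j_0$ and implement the column in one step, whereas the paper first state-transfers the boson from $j_0$ to the site of largest amplitude and then runs the single-shot from there. The paper's two-step scheme produces the column exactly (step~3 never touches site $j$, so the amplitude $U_{jj}$ set in step~2 is preserved), while your one-step scheme leaves an $O(|U_{j_0 j_0}|)$ phase defect at the center that you then repair with a trailing on-site phase; both cost the same time asymptotically. Your lower bound $\lambda \ge (\max_i|U_{ij_0}|\cdot \max_i d(i,j_0)^\alpha)^{-1}$ is exactly as tight as the paper's, since the paper ultimately also bounds its frequency $\omega$ via $|U_{2j}|\le |U_{1j}|$ and, for $\alpha>0$, uniformly caps all couplings at $O(m^{-\alpha/D})$.
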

To prove this, we will need an algorithm that implements columns of the unitary.
For convenience, let us first consider the case $\alpha = 0$.
The algorithm involves two subroutines, which we call the single-shot and state-transfer protocols.
Both protocols depend on the following observation.
If we implement a Hamiltonian that couples a site $i$ to all other sites $j\neq i$ through coupling strengths $J_{ij}$, then the effective dynamics is that of two coupled modes $a_i^\dagger$ and $b^\dagger = \frac{1}{\omega}\sum_{j\neq i} J_{ij} a_j^\dagger$, where $\omega = \sqrt{\sum_{j\neq i} J_{ij}^2 }$.
The effective speed of the dynamics is given by $\omega$-- for instance, the time period of the system is $\frac{2\pi}{\omega}$.

The single-shot protocol implements a map $a_i^\dagger \rightarrow \gamma_i a_i^\dagger + \sum_{j\neq i} \gamma_j a_j^\dagger$.
This is done by simply applying the Hamiltonian $H \propto a_i^\dagger (\sum_{j\neq i} \gamma_j a_j) + \mathrm{h.c.\ }$ for time $t=\frac{1}{\omega}\cos^{-1}\abs{\gamma_i}$.
In the case $\alpha=0$, we can set the proportionality factor equal to $1/\mathrm{max} |\gamma_j|$.
This choice means that the coupling strength between $i$ and the site $k$ with maximum $|\gamma_k|$ is set to 1 (the maximum), and all other couplings are equal to $|\frac{\gamma_j}{\gamma_k}|$.

The other subroutine, the state-transfer protocol is also an application of the above observation and appears in Ref.~\cite{Guo2019a}.
It achieves the map $a_i^\dagger \rightarrow \gamma_ia_i^\dagger + \gamma_j a_j^\dagger$ via two rounds of the previous protocol.
This is done by first mapping site $i$ to the uniform superposition over all sites except $i$ and $j$, and then coupling this uniform superposition mode to site $j$.
The time taken for this is $\frac{1}{\omega}\left( \frac{\pi}{2} + \cos^{-1}\abs{\gamma_i} \right)$.
Since $\omega = \sqrt{m-2}$ (all $m-2$ modes are coupled with equal strength to modes $i$ or $j$), this takes time $O\left( \frac{1}{\sqrt{m}} \right)$.

These subroutines form part of \cref{alg_1col}.
\begin{algorithm}[h]
 \caption{Algorithm for implementing one column of a unitary} \label{alg_1col}
 \SetInd{1em}{1em}
\DontPrintSemicolon
 \KwIn{Unitary $U$, column index $j$}
\ Reassign the mode labels for modes $i\neq j$ in nonincreasing order of $|U_{ij}|$. \;
\ Implement the state-transfer protocol to map the state $a_j^\dagger \ket{\mathrm{vac}}$ to $U_{jj} a_j^\dagger \ket{\mathrm{vac}} + \sqrt{1-|U_{jj}|^2} a_1^\dagger \ket{\mathrm{vac}} $. 
Skip this step if $|U_{jj}| \geq |U_{j1}|$ already. \;
\ Use the single-shot protocol between site $1$ and the rest ($i\neq 1,j$) to map $a_1^\dagger \rightarrow \frac{U_{1j}}{\sqrt{1-|U_{jj}|^2}} a_1^\dagger + \sum_{i\neq1,j} \frac{U_{ij}}{\sqrt{1-|U_{jj}|^2}} a_i^\dagger$.
\end{algorithm}
It can be seen that \cref{alg_1col} implements a map $a_j^\dagger \rightarrow U_{jj}a_j^\dagger + \sum_{i\neq j} U_{ij} a_i^\dagger$, as desired.
To prove \cref{lem_fastcol} we need to examine the runtime of the algorithm when $U$ is drawn from a Haar-random distribution.
\begin{proof}[Proof of \cref{lem_fastcol}]
First, notice that since the Haar measure is invariant under the action of any unitary, we can in particular apply a permutation map to argue that the elements of the $i$'th column are drawn from the same distribution as the first column.
Next, recall that one may generate a Haar-random unitary by first generating $m$ uniform random vectors in $\mathbb{C}^m$ and then performing a Gram-Schmidt orthogonalization.
In particular, this means that the first column of a Haar-random unitary may be generated by generating a uniform random vector with unit norm.
This implies that the marginal distribution over any column of a unitary drawn from the Haar measure is simply the uniform distribution over unit vectors, since we argued above that all columns are drawn from the same distribution.

Now, let us examine the runtime.
The first step (line 2 of the algorithm) requires time $t = O\left( \frac{1}{\sqrt{m}} \right)$ irrespective of $U_{jj}$ because the total time for state-transfer is $\frac{1}{\omega}\left( \frac{\pi}{2} + \cos^{-1}U_{jj} \right) \leq \frac{\pi}{\omega} = \frac{\pi}{\sqrt{m-2}}$.
Next, the second step takes time $t = \frac{1}{\omega} \cos^{-1}\left(\frac{U_{1j}}{\sqrt{1-|U_{1j}^2|}} \right) = O(\frac{1}{\omega})$.
Now, 
\begin{align}
\omega &= \sqrt{1^2 + \frac{|U_{3j}|^2/(1-|U_{jj}|^2)}{|U_{2j}|^2/(1-|U_{jj}|^2)} + \frac{|U_{4j}|^2}{|U_{2j}|^2} + \ldots} \\
&= \sqrt{\frac{\sum_{i=2, i\neq j}^m |U_{ij}|^2}{|U_{2j}|^2}} = \sqrt{\frac{1-|U_{1j}|^2 - |U_{jj}|^2}{|U_{2j}|^2}}
\end{align}
Now in cases where $|U_{jj}| \leq |U_{1j}|$ (where $|U_{1j}|$ is the maximum absolute value of the column entry among all other modes $i \neq j$), which happens with probability $1- \frac{1}{m}$, we will have $\omega^2 \geq \frac{1-2|U_{1j}|^2}{|U_{2j}|^2}$.
In the other case when $|U_{jj}| \geq |U_{1j}|$, meaning that the maximum absolute value among all entries of column $j$ is in row $j$ itself, we again have $\omega^2 \geq \frac{1-2|U_{jj}|^2}{|U_{2j}|^2}$.
Both these cases can be written together as $\omega^2 \geq \frac{1-2|U_{1j}|^2}{|U_{2j}|^2}$, where we now denote $U_{1j}$ as the entry with maximum absolute value among \emph{all} elements of column $j$.
The analysis completely hinges on the typical $\omega$ we have, which in turn depends on $|U_{1j}|$.
We will show $\Pr \left( \omega^2 \geq \frac{cm}{\log m} \right) \geq 1 - \frac{1}{\poly(m)}$, which will prove the claim for $\alpha = 0$.
\begin{align}
\Pr \left( \omega^2 \geq \frac{cm}{\log m} \right) \geq \Pr \left( 1 - 2|U_{1j}|^2 \geq c_1 \  \&\ |U_{2j}|^2 \leq \frac{c_1 \log m}{c m} \right)
\end{align}
since the two events on the right hand side suffice for the first event to hold.
Further,
\begin{align}
 \Pr \left( 1 - 2|U_{1i}|^2 \geq c_1 \  \&\ |U_{2j}|^2 \leq \frac{c_1 \log m}{c m} \right) \geq \Pr(|U_{1j}|^2 \leq \frac{c_1 \log m}{c m})
\end{align}
for large enough $m$ with some fixed $c_1 = 0.99$ (say), since $|U_{2j}|^2 \leq |U_{1j}|^2$ and $1- 1.98 \log m/m \geq 0.99$ for large enough $m$.

To this end, we refer to the literature on order statistics of uniform random unit vectors $(z_1, z_2, \ldots z_m) \in \mathbb{C}^m$ \cite{Lakshminarayan2008}.
This work gives an explicit formula for $F(x,m)$, the probability that all $|z_j|^2 \leq x$. 
We are interested in this quantity at $x = c_1 \log m/(cm)$, since this gives us the probability of the desired event ($\omega^2 \geq cm/\log m$).
We have
\begin{align}
\Pr\left(\frac{1}{k+1} \leq x \leq \frac{1}{k} \right) = \sum_{l=0}^k \begin{pmatrix} m\\ l \end{pmatrix} (-1)^l (1-lx)^{m-1}.
\end{align}
It is also argued in Ref.~\cite{Lakshminarayan2008} that the terms of the series successively underestimate or overestimate the desired probability.
Therefore we can expand the series and terminate it at the first two terms, giving us an inequality:
\begin{align}
\Pr\left(\frac{1}{k+1} \leq x \leq \frac{1}{k} \right) = 1 - m(1-x)^{m-1} + \frac{m^2}{2}(1-2x)^{m-1} - \ldots \\
\geq 1 - m(1-x)^{m-1}.
\end{align}
Choosing $c=c_1/4 = 0.2475$, we are interested in the quantity when $k = \lfloor \frac{m}{4\log m} \rfloor$:
\begin{gather}
\Pr (x \leq 4 \log m/m) \geq 1 - m (1-4\log m/m)^{m-1} \geq 1 - \frac{1}{m^{3-4/m}}, \mathrm{since \ } \\
(1-4\log m/m)^{m-1} = \exp \left[ (m-1) \log \left( 1-\frac{4\log m}{m} \right) \right] \leq \exp \left[ -4(m-1) \frac{\log m}{m} \right] = m^{-4(1-1/m)}.
\end{gather}
This implies that the time for the single-shot protocol is also $t = O(\frac{1}{\omega}) = O(\sqrt{\frac{\log m}{m}})$ for a single column.
Notice that we can make the polynomial appearing in $\Pr(\omega^2 \geq cm/\log m) \geq 1 - 1/\poly (m)$ as small as possible by suitably reducing $c$.
To extend the proof to all columns, we use the union bound.
In the following, let $t_j$ denote the time to implement column $j$.
\begin{align}
 \Pr(\exists j: t_j > \sqrt{\frac{\log m}{cm}}) &\leq \sum_j \Pr(t_j > \sqrt{\frac{\log m}{cm}}) \\
 & \leq  m \times \frac{1}{\poly(m)} = \frac{1}{\poly(m)}
\end{align}
when the degree in the polynomial is larger than 1, just as we have chosen by setting $c=0.2475$.
This implies
\begin{align}
 \Pr (\forall j: t_j \leq \sqrt{\frac{\log m}{cm}}) = 1 - \Pr(\exists j: t_j > \sqrt{\frac{\log m}{cm}}) \geq 1 -\frac{1}{\poly(m)}.
\end{align}
This completes the proof in the case $\alpha =0$.
When $\alpha \neq 0$, we can in the worst-case set each coupling constant to a maximum of $O(m^{-\alpha/D})$, which is the maximum coupling strength of the furthest two sites separated by a distance $O(m^{1/D})$.
This factor appears in the total time for both the state-transfer \cite{Guo2019a} and single-shot protocols, and simply multiplies the required time, making it $O\left(\sqrt{\frac{\log m}{m}}\times m^{\alpha/D}\right) = O\left(\frac{\sqrt{\log m}}{m^{1/2 - \alpha/D}}\right)$.
Finally, if there are any phase shifts that need to be applied, they can be achieved through an on-site term $J_{ii} a^\dag_i a_i$, whose strength is unbounded by assumption and can thus take arbitrarily short time.
\end{proof}
The total time for implementing boson sampling on $n$ bosons is therefore $O\left(n\frac{\sqrt{\log m}}{m^{1/2 - \alpha/D}}\right) = \tilde{O}\left(n^{1 + \beta (\frac{\alpha}{D}-\frac{1}{2})}\right)$, since we should implement $n$ columns in total.

\subsection{IV.A: \qquad Optimizing hardness time}
We can optimize the hardness time by implementing boson sampling not on $n$ bosons, but on $n^\delta$ of them, for any $\delta \in (0,1]$.
The explicit lower bounds on running time of classical algorithms we would get assuming fine-grained complexity-theoretic conjectures is again something like $\exp[n^{\poly(\delta)}]$ for any $\delta \in (0,1]$.
This grows very slowly with $n$, but it still qualifies as subexponential, which is not polynomial or quasipolynomial (and, by our definition, would fall in the category ``hard").
This choice of parameters allows us to achieve a smaller hardness timescale at the cost of getting a coarse (type-II) transition.
We analyze this idea in three cases: $\alpha \leq D/2$, $\alpha \in (\frac{D}{2}, D]$ and $\alpha > D$.

When $\alpha \leq D/2$, we perform boson sampling on the nearest set of $n^\delta$ bosons with the rest of the empty sites in the lattice as target sites.
In terms of the linear optical unitary, the unitary acts on $m - n^\delta = \Theta(m)$ sites in the lattice, although only the $n^\delta$ columns corresponding to initially occupied sites are relevant.
Using the protocol in \cref{lem_fastcol}, the total time to implement $n^\delta $ columns of an $m \times m$ linear optical unitary is $O(n^\delta m^{\alpha/D-1/2} \log n) = \tilde{O}(n^{\delta} n^{\frac{\beta}{D}{(\alpha - D/2)}})$.

When $\alpha \in (\frac{D}{2},D]$, the strategy is modified.
We first move the nearest set of $n^\delta$ bosons into a contiguous set of sites within a single cluster.
This takes time $O(n^\delta)$, since each boson may be transferred in time $O(1)$.
We now perform boson sampling on these $n^\delta$ bosons with the surrounding $n^{2\delta}$ sites as targets, meaning that the effective number of total sites is $m_\mathrm{eff} = O(n^{2\delta})$, as required for the hardness of boson sampling.
Applying \cref{lem_fastcol}, the time required to perform hard instances of boson sampling is now $O(n^\delta n^{2\delta (\alpha/D - 1/2)} \log n) = n^{O(\delta)}$ for arbitrarily small $\delta > 0$.

Lastly, when $\alpha > D$, we use the same protocol as above.
The time taken for the state transfer is now $n^\delta \times \min[L,L^{\alpha -D}]$.
Once state transfer has been achieved, we use nearest-neighbor hops instead of \cref{lem_fastcol} to create an instance of boson sampling in time $O(n^{2\delta/D})$.
Since state transfer is the limiting step, the total time is $n^\delta \times \min[L,L^{\alpha -D}]$.
The hardness timescale is obtained by taking the optimum strategy in each case, giving the hardness timescale $t_\mathrm{hard} = \tilde{O}(n^{\gamma^\mathrm{II}_\mathrm{hard}})$, where
\begin{align}
 \gamma^\mathrm{II}_\mathrm{hard} = \delta +
 \begin{cases}
  \frac{\beta-1}{D} \min[1,\alpha -D] & \alpha > D \\
  0 & \alpha \in (\frac{D}{2},D] \\
  \frac{\beta}{D} \left(\alpha -\frac{D}{2}\right) & \alpha < \frac{D}{2}
 \end{cases}
\end{align}
for an arbitrarily small $\delta>0$.
This proves \cref{thm_hard} for free bosons and for interacting bosons in the case $\alpha < D/2$.
When we compare with Ref.~\cite{Deshpande2018}, which states a hardness result for $\alpha \rightarrow \infty$, we see that we have almost removed a factor of $n$ from the timescale coming from implementing $n$ columns of the linear optical unitary.
Our result here gives a coarse hardness timescale of $\Theta(L)$ that matches the easiness timescale of $L$.
More importantly, this makes the noninteracting hardness timescale the same as the interacting one.

\end{widetext}

\end{document}